\newcommand{\restrict}[1]{| _{#1}}
\title{Oblivious Complexity Classes Revisited: Lower Bounds and Hierarchies}
\author{
    Karthik Gajulapalli\thanks{Georgetown University. Email: \email{kg816@georgetown.edu}.}
     \and
    Zeyong Li\thanks{National University of Singapore. Email: \email{li.zeyong@u.nus.edu}.}
 	\and
    Ilya Volkovich\thanks{Boston College. Email: \email{ilya.volkovich@bc.edu}.}
 }
\date{}
\begin{document}
\pagenumbering{roman}
\maketitle

\begin{abstract}
In this work we study \emph{oblivious} complexity classes. These classes capture the power of interactive proofs where the prover(s) are only given the input size rather than the actual input. In particular, we study the connections between the symmetric polynomial time - $\mathsf{S_2P}$ and its oblivious counterpart - $\mathsf{O_2P}$. Among our results:
\begin{itemize}
\item For each $k \in \mathbb{N}$, we construct an explicit language $L_k \in \mathsf{O_2P}$ that cannot be computed by circuits of size $n^k$.
\item We prove a hierarchy theorem for $\mathsf{O_2TIME}$. In particular, for any time constructible function $t:\mathbb{N} \rightarrow \mathbb{N}$ and any $\varepsilon > 0$ we show that: $\ensuremath{\mathsf{O_2TIME}}[t(n)] \subsetneq \ensuremath{\mathsf{O_2TIME}}[t(n)^{1 + \varepsilon}]$.

\item We prove new structural results connecting $\ensuremath{\mathsf{O_2P}}$ and $\ensuremath{\mathsf{S_2P}}$.
\item We make partial progress towards the resolution of an open question posed by Goldreich and Meir (TOCT 2015) that relates the complexity of $\mathsf{NP}$ to its oblivious counterpart - $\mathsf{ONP}$.
\item We identify a natural class of problems in $\mathsf{O_2P}$ from computational Ramsey theory, that are not expected to be in $\P$ or even $\BPP$.
\end{itemize}

To the best of our knowledge, these results constitute the first explicit fixed-polynomial lower bound and hierarchy theorem  for $\ensuremath{\mathsf{O_2P}}$. The smallest uniform complexity class for which such lower bounds were previously known was $\ensuremath{\mathsf{S_2P}}$ due to Cai (JCSS 2007). In addition, this is the first uniform hierarchy theorem for a semantic class. All previous results required some non-uniformity. 
In order to obtain some of the results in the paper, we introduce the notion of \emph{uniformly-sparse extensions} which could be of independent interest.

Our techniques build upon the de-randomization framework of the powerful Range Avoidance problem that has yielded many new interesting explicit circuit lower bounds.

\end{abstract}

\thispagestyle{empty}
\newpage

\pagenumbering{arabic}

\section{Introduction}
Proving circuit lower bounds has been one of the holy grails of theory of computation with strong connections to many fundamental questions in complexity theory.  
For instance, proving that there exists a function in $\cc{E}$\footnote{Deterministic time $2^{O(n)}$.} that requires exponential-size circuits would entail a strong derandomization: $\BPP = \P$ and $\MA = \NP$ \cite{NisanWigderson94,IW97}. And yet, while by counting arguments (i.e. \cite{Shannon49}) the vast majority of Boolean functions/languages do require exponential-size circuits, the best `explicit' lower bounds are still linear! (in fact the best known lower bound for any language in $\E^{\NP}$ is just linear \cite{li20221}).
Indeed, although it is widely \emph{believed} that $\NP$ requires super-polynomial-size circuits (i.e. $\NP \not \subseteq \PPoly$) establishing the statement even for $\NEXP$ (i.e. $\NEXP \not \subseteq \PPoly$), the exponential version of $\NP$,  has remained elusive for many years. The best known explicit lower bound is due to a seminal work of Williams \cite{Williams14}, where it was shown that $\NEXP$ requires super-polynomial-size circuits in a `very' restricted model ($\NEXP \not \subseteq \ACC^0$). 

In the high-end regime, Kannan \cite{Kannan82} has shown that the exponential hierarchy requires exponential-size circuits, via diagonalization\footnote{In fact, this argument could be viewed as solving an instance of the \emph{Range Avoidance} problem. See below.}.
More precisely, it was shown that the class $\cc{\Sigma_3E} \cap \cc{\Pi_3E}$ contains a language that cannot be computed by a circuit family of size $2^n/n$. 
This result was later improved to $\cc{\Delta_3E} = \cc{E}^{\cc{\Sigma_2P}}$ by Miltersen, Vinodchandran and Watanabe \cite{MVW99}.
Moreover, it was shown that $\cc{\Delta_3E}$ actually requires circuits of 
`maximum possible' size.  
Subsequently, the status of the problem remained stagnant for more than two decades until very recently, Chen, Hirahara and Ren \cite{chen2023symmetric} and a follow-up work by Li \cite{li2023symmetric} improved the result to $\Se$ \footnote{Symmetric exponential time.  
Indeed, 
$\Se \subseteq \cc{\Sigma_2E} \cap \cc{\Pi_2E} \subseteq \cc{\Delta_3E}.$ For a formal definition see \cref{def:St}.}.  
In particular, this result was obtained via solving the Range Avoidance ($\cc{Avoid}$) problem with `single-valued, symmetric polynomial-time' algorithm. Indeed, the focus of our work is on `oblivious' symmetric polynomial time and related complexity classes. 

\subsection{Background}

\subsubsection{Symmetric Time}

\emph{Symmetric polynomial time}, denoted by $\St$, was introduced independently by Canetti \cite{Canetti96}, and Russell and Sundaram \cite{RS98}. Intuitively speaking, this class captures the interaction between an efficient (polynomial-time) verifier $V$ and two all-powerful provers: the `YES'-prover $Y$ and the `NO'-prover $Z$, exhibiting the following behaviour: 

\begin{itemize}
    \item If $x$ is a yes-instance, then the `YES'-prover $Y$ can send an irrefutable proof/certificate $y$ to $V$ that will make $V$ \emph{accept}, \textbf{regardless} of the communication from $Z$.
    
     \item Likewise, if $x$ is a no-instance, then the `NO'-prover can send an irrefutable proof/certificate proof $z$ to $V$ that will make $V$ \emph{reject}, \textbf{regardless} of the communication from $Y$.
\end{itemize}
We stress that in both cases the irrefutable certificates can depend on $x$ itself. One can also define $\Se$ - the exponential version of $\St$, by allowing the verifier to run in linear-exponential time. For a formal definition see \cref{def:St}. A seminal result of \cite{Cai2007} provides the best known upper bound $\St \subseteq \ZPP^\NP$.
At the same time, $\St$ appears to be a very powerful class as it contains $\MA$ and $\Delta_2 \P = \P^\NP$.

\subsubsection{Oblivious Complexity Classes}

The study of oblivious complexity classes was initiated in \cite{CR06} and has subsequently received more attention  \cite{Aaronson07,FSW09,CR11,GoldreichMeir2015}. Roughly speaking, let $\Lambda$ be a complexity class such that in addition to the input $x$, the
machines $M(x,w)$ of $\Lambda$ also take a witness $w$ (and possibly other inputs). Examples of such classes include: $\NP, \MA, \St$, etc. The corresponding \emph{oblivious} version of $\Lambda$ is obtained by stipulating that the for every $n \in \N$ there exists a `common' witness $w_n$ for \textbf{all} the `respective' inputs of length $n$. For instance, a language $L$ belongs to $\ONP$ -- the oblivious version of $\NP$, if there exists a polynomial-time machine $M(x,w)$ such that:
\begin{enumerate}
    \item $\forall n \in \N$ there exists $w_n$ such that $\forall x \in \bit^{n}: x \in L \implies M(x, w_n) = 1$.

    \item $x \not \in L \implies \forall w: M(x,w) = 0$.
\end{enumerate}

Thus, in a similar manner, one can define the class $\Ot$ --- the oblivious version of $\St$, that is referred to as `oblivious symmetric polynomial time' in the literature. $\Ot$ has the additional requirement that for every $n \in \N$ there exist an irrefutable yes-certificate $y^*$ and an irrefutable no-certificate $z^*$ for all the yes-instances and the no-instances of length $n$, respectively.  
For a formal definition, see \cref{def:Ot}. 

It is immediate from the definitions that $\ONP \subseteq \NP$, $\Ot \subseteq \St$ and $\ONP \subseteq \Ot$. On the other hand, by hard-coding the witnesses/certificates we get that $\RP \subseteq \ONP \subseteq \Ot \subseteq \PPoly.$ 
In addition, it was also observed in \cite{CR06} that $\Ot$ is \emph{self-low}. That is $\Ot^{\Ot} = \Ot$.  While the oblivious classes seem to be more restricted than their non-oblivious counterparts, proving any non-trivial upper bounds could still be challenging. In terms of lower bounds, the best known containment of a non-oblivious class is $\BPP \subseteq \Ot$ \footnote{One way to see this is by observing that $\BPP \subseteq \RP^{\ONP}$ and then using the self-lowness of $\Ot$.}.
For more details and discussion see \cite{CR06,GoldreichMeir2015}.
Nonetheless, to the best of our knowledge, no ``natural" problem  for $\Ot$ (or even $\ONP$), known  to lie outside of $\BPP$, has been identified in the literature.

\subsubsection{Sparsity}

A language $L$ is \emph{sparse}, if for every input length $n \in \N$ the number of yes-instances of size $n$ is at most $\poly(n)$. We will denote the class of all sparse languages by $\SPARSE$.
Sparse languages have seen many applications in complexity theory. Perhaps, the most fundamental one is known as ``Mahaney's theorem'' \cite{Mahaney82} that implies that a sparse language cannot be $\NP$-hard, unless $\P = \NP$. In \cite{FSW09} and \cite{GoldreichMeir2015}, sparse languages were also studied in the context of oblivious complexly classes. In particular, they observed that $\NP \cap \SPARSE \subseteq \ONP$.
That is, every sparse $\NP$ language is also in $\ONP$. The same argument also implies that $\cc{NE} = \cc{ONE}$, that is, \textbf{equality} between the exponential versions of $\NP$ and $\ONP$, respectively. Given the former claim we observe that the \emph{Grid Coloring} problem, defined in \cite{AGL23}, constitutes a natural $\ONP$ (and hence $\Ot$) problem. For a formal statement, see \cref{OBS:main5}.

Subsequently, Goldreich and Meir \cite{GoldreichMeir2015} posed an open question whether a similar relation holds true for $\coNP$ and $\coONP$. That is, whether every sparse $\coNP$ language is also in $\coONP$\footnote{The original (equivalent) formulation of the question in \cite{GoldreichMeir2015} was w.r.t $\NP$ and co-sparse languages.}. Motivated by this question, we observe that essentially the same issues arise when one attempts to show that every sparse $\St$ language is also in $\Ot$. While we do not accomplish this task, we make a partial progress by introducing \emph{uniformly-sparse extensions}. The intuition behind this definition is to have a uniform `cover' of the segments of the yes-instances for \textbf{all} input lengths. For a formal definition see \cref{def:uniformly-sparse extension}. This is our main conceptual contribution.
As a corollary, we obtain that $\Se = \Oe$. Although this might not be a new result, to the best of our knowledge, this result has not appeared in the literature previously.

\subsubsection{Range Avoidance}
The study of the Range Avoidance problem ($\cc{Avoid}$) was initiated in \cite{KKMP21}. The problem itself takes an input-expanding Boolean Circuit $\cc{C}:\bit^n \rightarrow \bit^{n+1}$ as input and asks to find an element $y$, outside the range of $\cc{C}$. Since its introduction, there has been a steady line of exciting work studying the complexity and applications of $\cc{Avoid}$ \cite{korten2022hardest,GLW22,RSW22,CHLR23,GGNS23,ILW23,CGLOSS23,chen2023symmetric,li2023symmetric,CL23,KP24}.

Informally, $\cc{Avoid}$ algorithmically captures the probabilistic method where the existence of an object with some property follows from a union bound. In particular, Korten \cite{korten2022hardest} showed that solving $\cc{Avoid}$ (deterministically) would result in finding optimal explicit constructions of many important combinatorial objects, including but not limited to Ramsey graphs \cite{Radziszowski2021}, rigid matrices \cite{GLW22,GGNS23}, pseudorandom generators \cite{CT21}, two-source extractors \cite{CZ19,Li23}, linear codes \cite{GLW22}, strings with maximum time-bounded Kolmogorov complexity ($K^{\poly}$-random strings) \cite{RSW22} and truth tables of high circuit complexity \cite{korten2022hardest}.

The connection between $\cc{Avoid}$ and hard truth table makes it relevant to the study of circuit lower bounds. It has been observed and pointed out in many prior works (see, e.g. \cite{chen2023symmetric}) that proving explicit circuit lower bounds is effectively finding single-valued\footnote{Roughly speaking, a single-valued algorithm on successful executions should output a fixed (canonical) solution given the same input.} constructions of hard truth tables. Indeed, this is the framework adopted for proving circuit lower bounds in \cite{CHLR23,chen2023symmetric,li2023symmetric}: Designing a single-valued algorithm for solving $\cc{Avoid}$.

\subsubsection{Time Hierarchy Theorem}

Time Hierarchy theorems are among the most fundamental results in computational complexity theory, which (loosely speaking) assert that computation with more time is strictly more powerful. Time hierarchy theorems are known for deterministic computation ($\DTIME$) \cite{HS65,HS66} and non-deterministic computation ($\NTIME$) \cite{Cook1972,Seiferas1978,Zak83} which are syntactic classes. The situation for semantic classes such as $\BPTIME$ is much more elusive as it is unclear how to enumerate and simulate all $\BPTIME$ machines while ensuring that the simulating machine itself remains a proper $\BPTIME$ machine. In fact, even verifying that a machine is a $\BPTIME$ machine is itself an undecidable problem. For $\BPTIME$, a time hierarchy theorem is only known for its promise version, or when given one bit of advice \cite{Barak02,FS04,FST05}. This was further generalized in \cite{vMP07}, where they show most semantic classes (e.g. \MA, \Stt) admit a time hierarchy theorem with one bit of advice.

Along a different line of research, it was shown in \cite{LOS21, DPWV22} that coming up with a pseudo-deterministic algorithm (single-valued randomized algorithms) for estimating the acceptance probability of a circuit would imply a uniform hierarchy theorem for $\BPTIME$.

\subsection{Previous Results}

A parallel line of work focused on the `low-end' regime by proving the so-called `fixed-polynomial' circuit lower bounds. That is, the goal is to show that for every $k \in \N$ there is a language $L_k$ (that may depend on $k$) which cannot be computed by circuits of size $n^k$.
The first result in this sequel --- fixed-polynomial lower bounds for the polynomial hierarchy,
was obtained by Kannan \cite{Kannan82} via diagonalization. In particular, it was shown that for every $k \in \N$ there exists a language  $L_k \in \cc{\Sigma_4 \P}$ that cannot be computed by circuits of size $n^k$. This result was then improved to $\cc{\Sigma_2P}$.  The key idea behind this and, in fact, the vast majority of subsequent improvements is a `win-win' argument that relies on the \emph{Karp-Lipton} collapse theorem \cite{KarpLipton80}:
if $\NP$ has polynomial-size circuits (i.e $\NP \subseteq \PPoly$) then the (whole) polynomial hierarchy collapses to $\cc{\Sigma_2P}$. More specifically, the argument proceeds by a two-pronged approach:
\begin{itemize}
    \item Suppose $\NP \not \subseteq \PPoly$. Then the claim follows as $\NP \subseteq \cc{\Sigma_2P}$.
    \item On the other hand, suppose $\NP \subseteq \PPoly$. Then by Karp-Lipton:  $\cc{\Sigma_4P} =  \cc{\Sigma_2P}$ and in particular for all $k \in \N: L_k \in \cc{\Sigma_2P}$.
\end{itemize}
Indeed, by deepening the collapse, the result was further improved to $\ZPP^\NP$ \cite{KW98,BCGKT96}, $\P^{\cc{pr}\MA}$ \cite{CR11} and $\St$ \cite{Cai2007}. By using different versions of the Karp-Lipton theorem, the result has also been extended to  $\cc{PP}$ \cite{Vinodchandran05,Aaronson06} and $\MA/1$  \cite{Santhanam09}. 

Yet, despite the success of the `win-win' argument, the obtained lower bounds are often non-explicit due to the non-constructiveness nature of the argument. Different results \cite{CW04,Santhanam09} were required to exhibit explicit `hard' languages in $\cc{\Sigma_2P}, \PP$ and $\MA/1$. Nonetheless, the last word about $\St$ is yet to be said. For instance, we know that there is a language in $\St$ that requires circuits of size, say, $n^2$ from such arguments. However, prior to the results of \cite{chen2023symmetric,li2023symmetric}, one could not prove any super-linear lower bound for \textbf{any} particular language in $\St$. Another limitation of the `win-win' argument stems from the fact that it only applies to complexity classes which (provably) contain $\NP$. In particular, in \cite{CR06} it was actually shown that if $\NP \subseteq \PPoly$ then the polynomial hierarchy collapses all the way to $\Ot$!
Unfortunately, this result does not immediately imply fixed-polynomial lower bounds for $\Ot$\footnote{Indeed, the authors in \cite{CR06} could only obtain fixed-polynomial lower bounds for $\NP^{\Ot}$ which was later subsumed by the results of \cite{Santhanam09}.}
as it is unknown and, in fact, \emph{unlikely} that $\Ot$ contains $\NP$. Furthermore, such a containment will be `self-defeating'. Recall that $\Ot \subseteq \PPoly$. Hence, if $\NP \subseteq \Ot$ then $\NP \subseteq \PPoly$ which in and of itself already implies the collapse of the whole polynomial hierarchy! 
   
Finally, it is important to mention a result of \cite{FSW09} that for any $k \in \N$, $\NP$ has circuits of size $n^k$ iff $\ONP/1$ does. In that sense $\ONP$ already nearly captures the hardness of showing fixed-polynomial lower bounds for $\NP$.

\subsection{Our Results}

In our first result we extend the lower bounds for $\St$ and $\Se$, to their weaker oblivious counterparts $\Ot$ and $\Oe$, respectively. This result follows the recent line of research that obtains circuit lower bounds by means of \emph{deterministically} solving (i.e. derandomizing) instances of the Range Avoidance problem \cite{CHLR23,chen2023symmetric,li2023symmetric}.

\begin{maintheorem}
\label{THM:main1}
 For all $k \in \N$, $\Ot   \not \subseteq  \SIZE[n^k]$. Moreover, for each $k$ there exists an explicit language $L_k \in \Ot$ such that $L_k \not \in \SIZE[n^k]$.
\end{maintheorem}

In fact we prove a stronger parameterized version of this statement (see \cref{thm:O2_LB}, \cref{cor:OP_SP}, and \cref{cor:fixed_poly_direct}). We now highlight three main reasons why such a result is fascinating:
\begin{enumerate}
    \item Our lower bound does not follow the framework of \say{win-win} style Karp-Lipton collapses. As was mentioned above,  since already $\Ot \subseteq \PPoly$ the pre-requisite for proving the bound via the \say{win-win} argument will be self-defeating. 

    \item Our proof is constructive and for every $k \in \N$ we define an explicit language $L_k \in \Ot$ for which $L_k \not\subseteq \cc{SIZE}[n^k]$.

    \item $\Ot$ becomes the smallest uniform complexity class known for which fixed-polynomial lower bounds are known. Moreover, after more than 15 years, this class coincides again with the deepest known collapse result of the Karp-Lipton Theorem\footnote{Indeed, in the universe of  \cite{Cai2007} and \cite{CR06} prior to our work, the smallest class has been $\St$, while the deepest known collapse was to $\Ot$.}.
 
\end{enumerate}

Our second result gives a hierarchy theorem for $\Ott$.

\begin{maintheorem}
\label{THM:main2}
For any time constructible function $t:\N \rightarrow \N$ such that $t(n) \geq n$ and any $\varepsilon > 0$ 
 it holds that: $\Ott[t(n)] \subsetneq \Ott[t(n)^{1 + \varepsilon}]$.
\end{maintheorem}

We remark, that to the best of our knowledge, this is the first known hierarchy theorem for a uniform semantic class (that contains $\BPTIME$). At the same time, we observe that the proof of the non-deterministic time hierarchy theorem  ($\cc{NTIME}$) (see e.g. \cite{Zak83}) actually extends to the \emph{oblivious} non-deterministic time ($\cc{ONTIME}$) since the hard language constructed in their proof is unary and hence is contained in $\cc{ONTIME}$. On the other hand, that same language also diagonalizes against \textbf{all} $\cc{NTIME}$ machines which is a superset of all $\cc{ONTIME}$ machines.

In our time hierarchy theorem for $\Ott$, which goes through a reduction to $\cc{Avoid}$, one can view $\cc{Avoid}$ as a tool for diagonalization against all circuits of fixed size, which in turn contains all $\Ott$ machines with bounded time complexity. This (together with the time hierarchy theorem for $\cc{ONTIME}$) might suggest an approach for proving time hierarchy theorem for semantic classes in general: diagonalize against a syntactic class that encompasses the semantic class in consideration.\\

Finally, we introduce the notion of $\emph{uniformly-sparse extensions}$ (for a formal definition, see \cref{def:uniformly-sparse extension}) to get structural complexity results relating $\Ott$ and $\Stt$. This relation provides an alternate method of proving \cref{THM:main1}. 

\begin{maintheorem}
\label{THM:main3}
Let $L \in \St$. If $L$ has a \emph{uniformly-sparse extension} then  $L \in \Ot$.
\end{maintheorem}

While not much was known between the classes $\Ott$ and $\Stt$, except that $\Ott \subseteq \Stt$, we show new connections between the two classes. In fact, we prove a stronger parameterized version of \cref{THM:main3} that yields as a corollary a proof of the equivalence $\Se = \Oe$ (see \cref{cor:OE_SE}). Going back to the original motivation, by repeating the same argument, we make a partial progress towards the resolution of the open question posed by Goldreich and Meir in \cite{GoldreichMeir2015}.
See \cref{lem:Ot_simul_St} for more details.

\begin{maintheorem}
\label{THM:main4}
    Let $L \in \coNP$. If $L$ has a \emph{uniformly-sparse extension} then $L \in \coONP$.
\end{maintheorem}

Finally, we observe that computational Ramsey theory provides some very natural problems in $\ONP$ (and hence \Ot). As an example, we re-introduce the grid coloring problem below. While Claim 2.5 in \cite{GoldreichMeir2015} suggests a generic way to generate problems via padding arguments\footnote{The approach is to pick a language $L$ in $\Se$ that is (conjectured) not in $\cc{BPE}$. Then the padded version of $L$ will be in $\Ot \setminus \BPP$.}, these problems are, however, not very intuitive. 

\begin{definition}[Grid Coloring \cite{AGL23}]\; \;\\
$\mathsf{GC} = \condset{(1^n01^m01^c)}{\text{the $n \times m$ grid can be $c$-colored and not have any monochromatic squares.}}$
\end{definition}

Note that Grid Coloring is an example of one of such problems that are in $\NP \cap \SPARSE \subseteq \ONP$, and hence unlikely to be in $\BPP$. Other problems that come from computational Ramsey theory like the \emph{Gallai-Witt} theorem, and the \emph{Van der Waerden's} theorem have a very similar flavor.

\begin{mainobservation}
 \label{OBS:main5}
$\mathsf{GC} \in \ONP \subseteq \Ot$.
\end{mainobservation}

\noindent Below we make a few remarks. For a further discussion see \cite{Gasarch2010}.
\begin{itemize}
    \item $\mathsf{GC} \in \NP$ since the coloring itself is a witness.
    \item $\mathsf{GC}$ is not known to be in $\P$ or even $\BPP$.
    \item $\mathsf{GC} \in \SPARSE$. In fact, $\mathsf{GC}$ has a uniformly-sparse extension.
    \item Therefore, by the results of \cite{FSW09,GoldreichMeir2015}, $\mathsf{GC} \in \ONP$.
    \item On the other hand, by Mahaney's theorem $\mathsf{GC}$ is \emph{unlikely} to be $\NP$-complete.
\end{itemize}

\subsection{Proof Overview}

Our work builds on the recent line of work on Range Avoidance. \cite{korten2022hardest} provides a reduction of generating hard truth tables from $\cc{Avoid}$, and \cite{chen2023symmetric, li2023symmetric} give a single-valued $\St$ time algorithm for $\cc{Avoid}$.

\paragraph{Avoid Framework for Circuit Lower bounds} Let $\cc{TT_{n,s}}: \bit^{s \log s} \rightarrow \bit^{2^n}$ be the truth table generator circuit (see \cref{def:TT}), i.e. $\cc{TT_{n,s}}$ take as input an encoding of a $n$-input $s$-size circuit and outputs the truth table of the circuit. By construction, $\cc{TT_{n,s}}$ maps all circuits of size $s$ (encoded using $s \log s$ bits) to their corresponding truth tables. Then, $\cc{Avoid(TT_{n,s})}$ will output a truth-table not in the range of $\cc{TT_{n,s}}$ and hence not decided by any $s$-sized circuit (a circuit lower bound!!). For correctness we only need to ensure that $s \log s < 2^n$, so the $\cc{TT_{n,s}}$ is input-expanding, and hence a valid instance of $\cc{Avoid}$.

While the above construction gives us a way of getting explicit exponential lower bounds against even $\cc{SIZE}[2^n/n]$, the input to \cc{Avoid} is also exponential in input length $n$. As a result, the lower bounds we get are for the exponential class $\Se$ and not $\St$. Note that one can scale down this lower bound in a black-box manner to get fixed-polynomial lower bounds for $\St$, but will lose explicitness in the process.

To fix this we modify the above reduction to take as input the prefix truth table generator circuit, $\cc{PTT_{n,s}}: \bit^{s \log s} \rightarrow \bit^{s \log s + 1}$, where instead of evaluating the input circuit on the whole truth table, $\cc{PTT_{n,s}}$ evaluates on the lexicographically first $(s \log s + 1)$ inputs (see \cref{def:PTT}). Let $f_{n,s} = \cc{Avoid(PTT_{n,s})}$, and define the truth table of the hard language to be $L := f_{n,s} || 0^{2^n - s \log s -1}$. By construction, $L$ cannot be decided by any $n$-input $s$-size circuit. Moreover, when $s$ is polynomial, the size of  $\cc{PTT_{n,s}}$ is also polynomial\footnote{In literature the complexity of computing $\cc{PTT_{n,s}}$ (Circuit-Eval) is often left as $\poly$, however for our application of getting explicit lower bounds it is crucial to get its fine-grained complexity (see \cref{lem:tm_evaluate_ck} and \cref{lem:ptt_size}).} (\cref{lem:ptt_size}). Hence the single-valued\footnote{For the language to be well defined it is essential for the output of our algorithm to be single-valued.} algorithm computing $f_{n,s}$ is in $\St$ and the explicit fixed-polynomial bounds follow.

To see that the language $L \in \Ot$, observe that the $\St$ time algorithm is oblivious to $x$, since for any $x$ of length $n$, $f_{n,s}$ is the same. One important observation here is that, for the purpose of obtaining circuit lower bound, it suffices to solve Range Avoidance on \emph{one} specific family of circuits (the truth table generating circuit that maps another circuit to its truth table). Hence, while it is unclear whether Range Avoidance can be solved in $\cc{FO_2P}$, we could still obtain circuit lower bound for $\Ot$.

\paragraph{Hierarchy Theorems for $\Ott$} To get a hierarchy theorem for $\Ott$, we first get an upper bound on $\Ott$ computation via a standard Cook-Levin argument that converts the $\Ott$ verifier into a circuit (SAT-formula) for which we can hard code the \say{YES} and \say{NO} irrefutable certificates at every input length (\cref{lem:O2_UB}). A lower bound follows via the $\cc{Avoid}$ framework discussed above (\cref{thm:O2_LB}). We can now lift the hierarchy theorem on circuit size (\cref{thm:ckt_size_hierarchy}) to get a hierarchy on $\Ott$ (see \cref{thm:O2_heirarchy}).

\paragraph{Sparsity and Lower Bounds} We begin by introducing the notion of \emph{uniformly-sparse extensions}. Roughly speaking a sparse language $L$ has a \emph{uniformly-sparse extension} if there is a language $L' \in \P$, such that $L \subseteq L'$ and $L'$ is also sparse (for formal definitions see \cref{sec:prelim_sparsity}). 

We show that if a language $L \in \St$ has a \emph{uniformly-sparse extension}, then $L \in \Ot$. Let $L'$ be the \emph{uniformly-sparse extension} of a language $L \in \St$ and let $X = \{x \in L'\}$. Since $L' \in \P$, we first apply the polynomial time algorithm for $L'$ which let us filter out most inputs, i.e. $x \notin L'$, and hence $x \notin L$. We are now left with deciding membership in $L$ over the set $X$, where $|X| \leq \poly$. 

Let $V^*$ be the polynomial time $\cc{S_2}$-verifier for $L$, then for every $x \in X$ there exists either an irrefutable YES certificate ($y_x$) s.t. $V^*(x, y_x, \cdot) = 1$, or an irrefutable NO certificate ($z_x$) s.t. $V^*(x, \cdot , z_x) = 0$. Let $Y$ be the set of all such $y_x$'s and $Z$ be the set of all such $z_x$'s. Now for any $x \in X$, it suffices to find the correct $y_x$ from $Y$ (or $z_x$ from $Z$) and apply $V^*(x, y_x, z_x)$ to decide $x$.

In \cref{lem:Ot_simul_St} we prove a more efficient parameterized version of this argument. In addition, we are able to apply this in the exponential regime to show the equivalence $\Oe = \Se$ (see \cref{cor:OE_SE}).

\section{Preliminaries}

Let $L \subseteq \bit^*$ be a language. For $n \geq 1$ we define the \emph{$n$-th slice of $L$}, 
$L \restrict{n} := L \cap \bit^n$ as all the strings in $L$ of length $n$. The characteristic string of $L \restrict {n}$, denoted by $\mathcal{X}_{L \restrict{n}}$, is the binary string of length $2^n$ which represents the truth table defined by $L \restrict{n}$.

\subsection{Complexity Classes}
We assume familiarity with complexity theory and notion of non-uniform circuit families (see for e.g. \cite{AroraBarak09, Goldreich08}).

\begin{definition}[Deterministic Time]
    Let $t: \N \rightarrow \N$. We say that a language $L \in \cc{TIME}[t(n)]$, if there exists a deterministic time multi-tape Turing machine that decides $L$, in at most $O(t(n))$ steps.
\end{definition}

\begin{definition}[Symmetric Time]
\label{def:St}
    Let $t: \N \rightarrow \N$. We say that a language $L \in \Stt[t(n)]$, if there exists a $O(t(n))$-time predicate $P(x,y,z)$ that takes $x\in \bit^n$ and $y,z\in\bit^{t(n)}$ as input, satisfying that:
    \begin{enumerate}
        \item If  $x \in L$ , then there exists a $y$ such that for all $z$, $P(x,y,z) = 1$.
        \item If $x \notin L$, then there exists a $z$ such that for all $y$, $P(x,y,z) = 0$. 
    \end{enumerate}

    \noindent Moreover, we say $L \in \St$, if $L \in \Stt[p(n)]$ for some polynomial $p(n)$, and
    $L \in \Se$, if $L \in \Stt[t(n)]$ for $t(n) \leq 2^{O(n)}$.
\end{definition}

\begin{definition}[Single-valued $\FSt$ algorithm]
    A single-valued $\FSt$ algorithm $A$ is specified by a polynomial $\ell(\cdot)$ together with a polynomial-time algorithm $V_A(x, \pi_1, \pi_2)$. On an input $x\in\bit^*$, we say that $A$ outputs $y_x \in \bit^*$, if the following hold:
    \begin{enumerate}
        \item There exists a $\pi_1 \in \bit^{\ell(|x|)}$ such that for every $\pi_2 \in \bit^{\ell(|x|)}$, $V_A(x,\pi_1, \pi_2)$ outputs $y_x$.
        \item For every $\pi_1 \in \bit^{\ell(|x|)}$ there exists a $\pi_2 \in \bit^{\ell(|x|)}$, such that $V_A(x,\pi_1, \pi_2)$ outputs either $y_x$ or $\bot$.
    \end{enumerate}
    And we say that $A$ solves a search problem $\Pi$ if on any input $x$ it outputs a string $y_x$ and $y_x \in \Pi_x$, where a search problem $\Pi$ maps every input $x \in \bit^*$ into a solution set $\Pi_x \subseteq \bit^*$.
    
\end{definition}

We now formally define $\Ott$ - the oblivious version of the class $\Stt$.
The key difference is that unlike $\Stt$, where each irrefutable yes/no certificate can depend on the input $x$ itself, in $\Ott$ the yes/no certificates can \textbf{only} depend on $\size{x}$, the length of $x$. In other words, for every input length $n$, there exist a common YES-certificate $\mathbf{y^*}$ and a  common NO-certificate $\mathbf{z^*}$ for checking membership of $x \in L \restrict{n}$.

\begin{definition}[Oblivious Symmetric Time]
\label{def:Ot}
  Let $t: \N \rightarrow \N$. We say that a language $L \in \Ott[t(n)]$, if there exists a $O(t(n))$-time predicate $P(x,y,z)$ such that for every $n \in \N$ there exist $\mathbf{y^*}$ and $\mathbf{z^*}$ of length $O(t(n))$ satisfying the following for every input $x \in \bit^n:$ 
    \begin{enumerate}
        \item If $x \in L$, then for all $z$, $P(x,\mathbf{y^*},z) = 1.$
        \item If $x \notin L$, then for all $y$, $P(x,y,\mathbf{z^*}) = 0$. 
    \end{enumerate}
        
   \noindent Moreover, we say $L \in \Ot$, if $L \in \Ott[p(n)]$ for some polynomial $p(n)$, and
    $L \in \Oe$, if $L \in \Ott[t(n)]$ for $t(n) \leq 2^{O(n)}$.
\end{definition}

\subsection{Nonuniformity}
 We recall certain circuit properties:

\begin{definition}
    A boolean circuit $C$ with $n$ inputs and size $s$, is a Directed Acyclic Graph (DAG) with $(s + n)$ nodes. There are $n$ source nodes corresponding to the inputs labelled $1, \hdots, n$ and one sink node labelled $(n+s)$ corresponding to the output. Each node, labelled $(n+i)$, for $1 \leq i \leq s$ has an in-degree of $2$ and corresponds to a gate computing a binary operation over its two incoming edges.
\end{definition}

\begin{definition}
    Let $s: \N \rightarrow \N$. We say that a language $L \in \cc{SIZE}[s(n)]$ if $L$ can be computed by circuit families of size $O(s(n))$ for all sufficiently large input size $n$.
\end{definition}

\begin{definition}
    Let $s: \N \rightarrow \N$. We say that a language $L \in i.o.\text{-}\cc{SIZE}[s(n)]$ if $L$ can be computed by circuit families of size $O(s(n))$ for infinitely many input size $n$.
\end{definition}

By definition, we have $\cc{SIZE}[s(n)] \subseteq i.o.\text{-}\cc{SIZE}[s(n)]$. Hence, circuit lower bounds against $i.o.\text{-}\cc{SIZE}[s(n)]$ are stronger and sometimes denoted as \emph{almost-everywhere} circuit lower bound in the literature.

We now state the hierarchy theorem for circuit size. The standard proof of this result is existential and goes through a counting argument (see e.g. \cite{AroraBarak09}). However, we comment that using the framework of $\cc{Avoid}$, we can now actually get a constructive size hierarchy theorem, albeit with worse parameters.

\begin{theorem}[Circuit Size Hierarchy Theorem\cite{AroraBarak09}]
\label{thm:ckt_size_hierarchy}
    For all functions $s: \N \rightarrow \N$ with $n \leq s(n) < o(2^n/n)$: $$\cc{SIZE}[s(n)] \subsetneq \cc{SIZE}[10s(n)]\; .$$
\end{theorem}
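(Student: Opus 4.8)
The plan is to establish the separation $\cc{SIZE}[s(n)] \subsetneq \cc{SIZE}[10s(n)]$ by a counting argument: exhibit, for each sufficiently large $n$, a Boolean function on $n$ inputs that is computable by a circuit of size $O(10 s(n))$ but not by any circuit of size $O(s(n))$, and then glue these functions across all input lengths into a single language. The counting side is the classical Shannon/Lupanov bookkeeping: the number of circuits on $n$ inputs of size at most $s$ is bounded by roughly $(s+n)^{O(s)} = 2^{O(s \log s)}$ (each of the $s$ gates picks a binary operation and two predecessors among $s+n$ nodes), whereas the number of functions computable by circuits of size $\le 10 s(n)$ is much larger — indeed already every function on $n$ bits has a circuit of size $O(2^n/n)$, and more to the point the family of functions of circuit complexity \emph{at most} $10 s(n)$ strictly dominates in cardinality the family of circuit complexity $\le s(n)$ once $s(n) = o(2^n/n)$, because one can take a function of circuit complexity $\le s(n)$ and surgically modify it on a few inputs (or XOR in a small ``bump'') to push its complexity into the window $(s(n), 10 s(n)]$. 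The cleanest instantiation: pick a function $g_n$ of circuit complexity exactly in the interval $(s(n), 10 s(n)]$ — such a $g_n$ exists because circuit complexities of $n$-bit functions take every value from $0$ up to $\Theta(2^n/n)$ and between consecutive achievable complexities the jump is at most a constant (adding one gate changes complexity by $1$), so no gap of multiplicative size $10$ can be skipped once $s(n)$ is below the $\Theta(2^n/n)$ ceiling.

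Concretely, first I would fix $n$ large enough that $s(n) < c \cdot 2^n/n$ for the relevant absolute constant $c$. Second, I would invoke the ``complexity takes all intermediate values'' observation to produce a function $g_n : \bit^n \to \bit$ with $s(n) < \mathrm{CC}(g_n) \le 10 s(n)$, where $\mathrm{CC}$ denotes minimum circuit size: start from any function of complexity $\ge 10 s(n)$ (exists by Shannon counting, since most functions need $\Omega(2^n/n) > 10 s(n)$ gates), and repeatedly replace it by a function obtained by reprogramming one input value, which decreases complexity by at most an additive constant, until the complexity first drops to at most $10 s(n)$; at that moment it is still strictly above $s(n)$ by the additive-gap property. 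Third, by definition $g_n \notin \cc{SIZE}[s(n)]$ (for this $n$) and $g_n \in \cc{SIZE}[10 s(n)]$. Fourth, define the language $L = \{ x \in \bit^* : g_{|x|}(x) = 1 \}$. Then $L \in \cc{SIZE}[10 s(n)]$ slice by slice, but $L \notin \cc{SIZE}[s(n)]$ because on every input length $n$ (large enough) its $n$-th slice already requires circuits of size exceeding $O(s(n))$ — and one must be a little careful that the hidden constant in the $O(\cdot)$ of the class definition is absorbed, which is exactly why the gap is stated as a factor of $10$ rather than $1+o(1)$: the slack swallows the constant from the ``$O$'' in $\cc{SIZE}[\cdot]$ as well as the additive constants from the reprogramming argument.

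The main obstacle — really the only subtle point — is matching the multiplicative constant $10$ against the additive/multiplicative constants hidden in the definition of $\cc{SIZE}[s(n)]$ (the ``$O(s(n))$'' and the ``for all sufficiently large $n$'') and in the encoding conventions for circuits used in the counting bound. One has to verify that the intermediate-value argument really gives an additive gap of $1$ (or a fixed constant) per reprogrammed point, so that the window $(s(n), 10 s(n)]$ is genuinely hit, and that the counting upper bound $2^{O(s \log s)}$ on the number of size-$s$ circuits, combined with $s(n) = o(2^n/n)$, indeed leaves functions of complexity in that window. None of this requires new ideas — it is the textbook argument of \cite{AroraBarak09} — so I would present it briefly, emphasizing only the bookkeeping that makes the factor $10$ (rather than $1+\varepsilon$) necessary, and note in passing the alternative constructive route through $\cc{Avoid}$ applied to the truth-table generator $\cc{TT}_{n,s}$ restricted appropriately, which yields the same separation with worse constants but an explicit hard function.
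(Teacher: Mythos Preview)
The paper does not actually prove this statement: it is quoted as a known result from \cite{AroraBarak09}, with only the remark that ``the standard proof of this result is existential and goes through a counting argument,'' together with the aside that the $\cc{Avoid}$ framework yields a constructive proof with worse constants. Your proposal is precisely that standard counting/intermediate-value argument (and you even mention the $\cc{Avoid}$ alternative), so it is fully in line with what the paper invokes.

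Two small corrections. First, reprogramming the value of a function on a single input changes its circuit complexity by at most $O(n)$, not by an additive constant (the gadget that tests ``$x = a$'' and flips the output needs about $n$ gates); the argument still goes through because $s(n) \ge n$ makes the target window $(s(n), 10s(n)]$ of width $9s(n) \ge 9n$ wider than any single step. Second, your attempt to use the factor $10$ to ``swallow the $O(\cdot)$'' in the paper's definition of $\cc{SIZE}[\cdot]$ cannot work as stated: with a big-$O$ baked into the definition, $\cc{SIZE}[s(n)]$ and $\cc{SIZE}[10s(n)]$ are literally the same class, so the theorem only makes sense under the textbook convention (no hidden constant), which is what the cited source uses. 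This is a definitional inconsistency in the paper rather than a flaw in your argument.
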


For our applications, it will be essential to have a tight encoding scheme for circuits. In fact, we will also need the fine-grained complexity of the Turing machine computing Circuit-Eval (i.e. given as input a description of a circuit $C$ and a point $x$, computes $C(x)$).

\begin{lemma} \label{lem:ckt_encoding}
    For $n,s \in \N$, and $s \geq n \geq 12$ , any $n$-input, $s$-size circuit $C$, there exists an encoding scheme $E_{n,s}$ which encodes $C$ using $5 s \log s$ bits.
\end{lemma}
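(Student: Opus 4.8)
The plan is to design an encoding scheme that describes an $n$-input, $s$-size circuit $C$ by explicitly listing, for each of the $s$ gates (labelled $n+1,\dots,n+s$), the two indices of its incoming wires together with the binary operation it computes. Concretely, I would encode a gate labelled $n+i$ as a triple: a constant-size field specifying which of the (at most $16$) binary boolean operations the gate applies, followed by two wire-index fields, each naming a node in $\{1,\dots,n+s\}$. Since every index lies in $\{1,\dots,n+s\}$ and $s \geq n$, so $n+s \leq 2s$, each index fits in $\lceil \log(2s)\rceil = \lceil \log s\rceil + 1$ bits; the operation field costs $4$ bits. The total cost per gate is thus at most $2(\lceil \log s\rceil + 1) + 4 \leq 2\log s + 8$ bits, and over all $s$ gates this is at most $s(2\log s + 8)$ bits. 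Using $s \geq n \geq 12$ (so $\log s \geq \log 12 > 3.5$, hence $8 \leq 2.3\log s$), we get $s(2\log s + 8) \leq s(2\log s + 3\log s) = 5 s\log s$, which is the claimed bound. (I would be slightly careful whether $\log$ means $\log_2$ here; the numerology is chosen precisely so the constant $5$ absorbs the rounding and the operation-type overhead, and any off-by-small-constant issue is handled by the generous slack in $5$ versus $2$.)

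The main steps, in order, are: (1) fix the concatenation format — a fixed-width header recording $n$ and $s$ in binary (negligible $O(\log s)$ overhead, absorbed into the slack), followed by $s$ equal-length gate-blocks in order of gate label; (2) verify the per-gate bit count $2(\lceil\log s\rceil+1)+4$ as above; (3) multiply by $s$ and discharge the inequality $s(2\log s+8)+O(\log s)\le 5s\log s$ using $s\ge 12$; (4) note well-formedness is easy to check — each wire index of gate $n+i$ should be at most $n+i-1$ so the DAG/topological-order condition is automatic, and an arbitrary $5s\log s$-bit string can be interpreted as (possibly garbage) circuit data, so decoding is total. Since the statement only asserts existence of an encoding of the stated length, I do not even need injectivity beyond the obvious fact that distinct circuits with distinct gate-lists get distinct codes.

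The only mild obstacle is bookkeeping: making sure the header and the rounding in $\lceil\log s\rceil$ genuinely fit inside the constant $5$ rather than, say, $4$, which is exactly why the hypothesis $s\ge n\ge 12$ appears (it forces $\log s$ large enough that the additive $O(1)$ terms are dominated). There is no conceptual difficulty — this is a routine tight-encoding construction — and it sets up the later fine-grained Circuit-Eval bound (\cref{lem:tm_evaluate_ck}) and the size estimate for $\cc{PTT}_{n,s}$ (\cref{lem:ptt_size}), both of which need to read off a gate's description in time proportional to its $O(\log s)$-bit block.
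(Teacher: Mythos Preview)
Your proposal is correct and follows essentially the same approach as the paper: list the $s$ gates in topological order, encoding each by a $4$-bit operation field plus two wire indices of roughly $\log(n+s)\le \log(2s)$ bits each, then verify that the total fits in $5s\log s$ once $s\ge n\ge 12$. The only cosmetic difference is that the paper additionally spends $n\log(n+s)$ bits to record the input labels (giving a total $n\log(n+s)+s(2\log(n+s)+4)\le 3s\log(2s)+4s\le 5s\log s$), whereas you omit that and instead add a negligible $O(\log s)$ header; both variants discharge the same inequality using $\log s\ge\log 12$.
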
 

\begin{proof}
    Let $C$ be an $n$-input, $s$-size circuit, we now define $E_{n,s}$. Each gate label from $1, \hdots, n+s$ can be encoded using $\log(n + s)$ bits. First encode the $n$ inputs using $n \log(n + s)$ bits. Next fix a topological ordering of the remaining gates. For each gate we can encode its two inputs (two previous gates) with $2 \log(n + s)$ bits and the binary operation which requires $4$ bits (since there 16 possible binary operations). So the length of our encoding is $n \log (n + s) + s (2 \log (n + s) + 4) \leq 3s\log(2s) + 4s \leq  5 s \log s$ for all $n \geq 12$.
\end{proof}

\begin{lemma} \label{lem:tm_evaluate_ck}
    For $n,s \in \N$, and $s \geq n$, let $E_{n,s}$ be an encoding of an $n$-input, $s$-size circuit $C$ using \cref{lem:ckt_encoding}. Then there exists a multi-tape Turing machine $M$ such that $M(E_{n,s}, x) = C(x)$ and it runs in $O(s^2 \log s)$ time.
\end{lemma}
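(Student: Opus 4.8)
The plan is to exhibit an explicit multi-tape Turing machine $M$ that, given the encoding $E_{n,s}$ of a circuit $C$ produced by \cref{lem:ckt_encoding} together with an input point $x \in \bit^n$, simulates the evaluation of $C$ gate by gate in topological order and writes out $C(x)$, all within $O(s^2 \log s)$ steps. The key structural fact I would rely on is that $E_{n,s}$ lists the gates in a fixed topological order, each gate record occupying $O(\log s)$ bits and specifying the labels of its two predecessor gates plus a $4$-bit opcode; since $s \geq n$, every label fits in $\log(n+s) \le \log(2s) = O(\log s)$ bits. I would maintain a dedicated work tape (call it the \emph{value tape}) on which $M$ stores the already-computed value of each gate $1,\dots,n+s$ in a fixed-width field of $\lceil \log(n+s)\rceil$ cells (padding the single output bit so that the field layout is uniform and addressable by simple arithmetic on the label).

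First I would have $M$ initialize the value tape: copy the $n$ bits of $x$ into the fields for the $n$ source gates. Then I would iterate through the gate records of $E_{n,s}$ in order; for the $i$-th internal gate (label $n+i$), $M$ reads the two predecessor labels $a,b$ and the opcode $\mathrm{op}$ from the current record, then for each of $a$ and $b$ it scans the value tape to locate the field at offset $(\text{label})\cdot\lceil\log(n+s)\rceil$ and reads the stored bit, applies the binary operation indicated by $\mathrm{op}$ (a constant-size table lookup on two bits), and writes the resulting bit into the field for gate $n+i$. After processing all $s$ internal gates, $M$ copies the value stored in the field for the sink gate $n+s$ to the output tape and halts.

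For the running-time bound I would argue as follows. There are $s$ internal gates. Processing one gate requires: reading its $O(\log s)$-bit record (the head is already positioned there, or a move of $O(\log s)$ cells); then two lookups on the value tape, each of which is a seek to an absolute position of magnitude $O((n+s)\log s) = O(s \log s)$ on the value tape, costing $O(s\log s)$ head movement, plus $O(\log s)$ arithmetic to compute the target offset from the label; then $O(1)$ work for the operation and $O(\log s)$ to write the single output bit into its field. So each gate costs $O(s\log s)$, and over all $s$ gates this is $O(s^2 \log s)$; the initialization (copying $x$, which is $n \le s$ bits, into the value tape) and the final output copy ($O(\log s)$ cells) are lower-order. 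This matches the claimed bound.

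The main obstacle — really the only subtlety — is the random-access pattern on a multi-tape TM: a gate may reference predecessors that are far away in the topological order, so each of the two predecessor lookups genuinely costs $\Theta(s\log s)$ in the worst case, and I must be careful that the per-gate cost is $O(s\log s)$ and not, say, $O(s\log^2 s)$ or worse. I would keep this tight by storing gate values at \emph{fixed-width, label-indexed} offsets on a single work tape (so the offset is just label $\times$ width, computable with a short counter while scanning), rather than in a variable-width or search-structured layout; then each lookup is one linear sweep of length $O(s\log s)$. One could shave the bound using more tapes or a smarter layout, but since the lemma only asks for $O(s^2\log s)$ the straightforward two-sweeps-per-gate simulation suffices, and I would not optimize further.
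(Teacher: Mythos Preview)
Your proposal is correct and follows essentially the same approach as the paper: evaluate the gates sequentially in the topological order given by the encoding, keeping the already-computed gate values on a dedicated work tape indexed by gate label, so that each of the two predecessor lookups is a linear sweep. The only cosmetic difference is that the paper stores one bit per gate (so the memory tape has $n+s$ cells and each lookup costs $O(s)$), whereas you pad each value field to width $\lceil\log(n+s)\rceil$, making the value tape $O(s\log s)$ cells long and each lookup $O(s\log s)$; both analyses land inside the target bound $O(s^2\log s)$.
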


\begin{proof}
    We utilize one tape (memory tape)  to store all the intermediate values computed at each gate $g_i$ using $n + s$ cells, and a second tape (evaluation tape) using $6$ cells to compute the value at each $g_i$. We process each gate sequentially as it appears in the encoding scheme, and let $g_{i_l}$ and $ g_{i_r}$ be the two gates feeding into $g_i$. Since \cref{lem:ckt_encoding} encodes the gates in a topological order, we can assume that when computing $g_i$, both $g_{i_l}$ and $g_{i_r}$ have already been computed.
    First copy the value of input bits of $x$ onto the memory tape, and move the head of the input tape to the right by $n \log(n + s)$ steps in $O(s \log s)$ time. Now to compute a gate $g_i$ we write the values of $g_{i_l}$ and $g_{i_r}$ along with the binary operation onto the evaluation tape. We can compute any binary operation with just constant overhead and write its value onto the $i$th cell of the memory tape. To output the evaluation of the circuit we output the value on the $(n + s)$th cell of the memory tape. The cost of evaluating each gate is dominated by the 2 read and 1 write operations on the memory tape that take $O(s)$ time. Since the size of the input upper bounds the number of gates we have that the simulation takes $O(s|E_{n,s}|) = O(s^2 \log s)$ time.
\end{proof}

Finally, we recall the famous Cook-Levin theorem that lets us convert a machine $M \in \cc{TIME}[t(n)]$ into a circuit $C \in \cc{SIZE}[t(n) \log t(n)]$. 

\begin{theorem}[Cook-Levin Theorem \cite{AroraBarak09}]\label{lem:tm_to_ckt}
    Let $t: \N \rightarrow \N$ be a time constructible function. Then any multi-tape Turing machine running in $\cc{TIME}[t(n)]$ time can be simulated by a circuit-family of $\cc{SIZE}[t(n) \log t(n)]$.
\end{theorem}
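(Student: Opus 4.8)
The plan is to establish the bound in two stages: first pass to an \emph{oblivious} model of computation, in which the head positions at every step are fixed functions of the input length alone, and then hard-wire such an oblivious computation directly into a circuit whose size is merely \emph{linear} in the running time. The $\log t(n)$ factor in the target bound will be incurred entirely in the first stage; the second stage is lossless.

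\emph{Stage 1 (obliviousness).} I would invoke the classical two-tape oblivious simulation of Pippenger and Fischer (building on Hennie and Stearns): any multi-tape Turing machine running in time $t(n)$ can be simulated by a two-tape Turing machine running in time $T(n) = O(t(n)\log t(n))$ that is \emph{oblivious}, meaning that for every step $i \le T(n)$ the positions $h_1(n,i)$ and $h_2(n,i)$ of its two heads depend only on $n$ and not on the contents of the input. Since $t$ is time constructible, we may further assume this machine halts after exactly $T(n)$ steps on every input of length $n$, recording its accept/reject decision in its final state.

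\emph{Stage 2 (circuit from an oblivious machine).} Fix an input length $n$, write $T = T(n)$, and regard $h_1,h_2 \colon [T] \to \N$ as fixed functions; together with them, define $\nu_b(i) = \min\{ j > i : h_b(j) = h_b(i)\}$, the next step at which head $b$ revisits the cell it scans at step $i$. All of these are determined by $n$ alone. I would build $C_n$ as a chain of $T$ constant-size \emph{transition gadgets} $G_1,\dots,G_T$. Gadget $G_i$ takes a constant-width bundle encoding the machine's state at step $i$ together with constant-width bundles for the symbols currently scanned by heads $1$ and $2$, and it outputs the state at step $i{+}1$ and the symbols written under the two heads. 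The wiring is: (i) the state output of $G_i$ feeds the state input of $G_{i+1}$; (ii) the ``symbol written under head $b$'' output of $G_i$ feeds the ``symbol scanned by head $b$'' input of $G_{\nu_b(i)}$; and (iii) for the \emph{first} gadget that scans a given cell $k$ of tape $1$, that symbol input is wired to the circuit input $x_k$ when $k \le n$ and to the constant blank symbol otherwise (tape $2$ begins entirely blank). The output of $C_n$ is a fixed Boolean function of the state bundle emitted by $G_T$. Correctness follows by induction on $i$: if the wires entering $G_i$ carry the true step-$i$ state and the true scanned symbols, the gadget — which merely implements the transition function — produces the true step-$(i{+}1)$ data, and clause (ii) routes to each cell, at the moment it is next read, exactly the symbol last written there; this is precisely the property that obliviousness makes the \emph{static} wiring legitimate. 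The gate count is $O(1)$ per gadget times $T$ gadgets, i.e. $|C_n| = O(T) = O(t(n)\log t(n))$, which gives $\cc{TIME}[t(n)] \subseteq \cc{SIZE}[t(n)\log t(n)]$. If the family is additionally wanted to be efficiently constructible — relevant to the paper's fine-grained applications — one notes that $h_1,h_2$ and the $\nu_b$ are obtained by simulating the oblivious machine on a dummy length-$n$ input, so $C_n$ can be produced in time $\poly(T(n))$.

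\emph{Main obstacle.} The only genuine content lies in Stage 1: the existence of an $O(t\log t)$-time \emph{oblivious} two-tape simulation is a nontrivial classical theorem, and without it the naive tableau construction yields only $\cc{SIZE}[t(n)^2]$. Stage 2 is essentially bookkeeping, the one subtle point being the treatment of the first access to each cell (initial input bits versus blank symbols) and the convention for reading off the answer from the final state; beyond that, correctness is the straightforward induction sketched above.
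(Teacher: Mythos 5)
Your proposal is correct and is precisely the standard argument behind this bound: the Hennie--Stearns/Pippenger--Fischer oblivious two-tape simulation with $O(t\log t)$ overhead, followed by the linear-size ``static wiring'' of an oblivious computation into a chain of constant-size transition gadgets. The paper itself gives no proof here---it cites the result from the textbook---and your reconstruction, including the correct identification of where the $\log t$ factor arises and the careful handling of first cell accesses and of forcing the machine to run exactly $T(n)$ steps, matches the cited argument.
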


\subsection{Range Avoidance}

\begin{definition}
    The Range Avoidance ($\cc{Avoid}$) problem is defined as follows: given as input the description of a Boolean circuit $C:\bit^n \rightarrow \bit^{m}$, for $m > n$, find a $y \in \{0, 1\}^m$ such that $\forall x \in \{0, 1\}^n: C(x) \neq y$.
\end{definition}

An important object that connects $\cc{Avoid}$ and circuit lower bound is the truth table generator circuit.

\begin{definition}\label{def:TT}\cite[Section 2.3]{chen2023symmetric}
     For $n, s\in \N$ where $n \leq s \leq 2^n$, the truth table generator circuit $\cc{TT}_{n,s}:\bit^{L_{n,s}} \rightarrow \bit^{2^n}$ maps a $n$-input size $s$ circuit using $L_{n,s} = (s+1)(7+\log(n+s))$ bits of description\footnote{in fact, it maps a stack program of description size $L_{n,s}$ and it is known that every $n$-input size $s$ circuit has an equivalent stack program of size $L_{n,s}$ \cite{frandsen2005reviewing}.} into its truth table. Moreover, such circuit can be uniformly constructed in time $\poly(2^n)$.
\end{definition}

For the purpose of obtaining fixed polynomial circuit lower bound, we generalise the truth table generator circuit above into one that outputs the prefix of the truth table. We also use a different encoding scheme (with constant factor loss in the parameter) for the convenience of presentation.

\begin{definition}\label{def:PTT}
     For $n, s\in \N$ where $12 \leq n \leq s \leq 2^n$ and $|E_{n,s}| = 5s \log s < 2^n$, the prefix truth table generator circuit $\cc{PTT}_{n,s}:\bit^{|E_{n,s}|} \rightarrow \bit^{|E_{n,s}|+1}$ maps a $n$-input circuit of size $s$ described with $|E_{n,s}|$ bits into the lexicographically first $|E_{n,s}| + 1$ entries of its truth table. %
\end{definition}

Since we want to prove lower bounds not just in the exponential regime, but also in the polynomial regime for any fixed polynomial, we need a more fine-grained analysis for the running time of uniformly generating $\cc{PTT}_{n,s}$

\begin{lemma}\label{lem:ptt_size}
    The prefix truth table generator circuit $\cc{PTT}_{n,s}:\bit^{|E_{n,s}|} \rightarrow \bit^{|E_{n,s}|+1}$ has size $O(|E_{n,s}|^3)$ and can be uniformly constructed in time $O(|E_{n,s}|^3)$.
\end{lemma}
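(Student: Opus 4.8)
\textbf{Proof plan for \cref{lem:ptt_size}.}
The plan is to build $\cc{PTT}_{n,s}$ as the parallel juxtaposition of $|E_{n,s}|+1$ single-output circuits, one per output bit. Let $x_1 < x_2 < \cdots < x_{|E_{n,s}|+1}$ be the lexicographically smallest $n$-bit strings; these are well defined since, by the hypothesis of \cref{def:PTT}, $|E_{n,s}| = 5s\log s < 2^n$, so $|E_{n,s}|+1 \le 2^n$. Each $x_i$ is simply the $(i-1)$-st integer written in binary and padded to $n$ bits, hence determined by $n$ and $s$ alone and computable in $O(n)$ time. For each $i$ we will construct a circuit $D_i$ with $D_i(E_{n,s}) = C(x_i)$, where $C$ is the $n$-input size-$s$ circuit encoded by $E_{n,s}$ via \cref{lem:ckt_encoding}; then $\cc{PTT}_{n,s}(E_{n,s}) := \bigl(D_1(E_{n,s}),\dots,D_{|E_{n,s}|+1}(E_{n,s})\bigr)$, all $D_i$ sharing the common input wires $E_{n,s}$.

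First I would obtain each $D_i$. By \cref{lem:tm_evaluate_ck} there is a multi-tape Turing machine $M$ with $M(E_{n,s},x)=C(x)$ running in time $t := O(s^2\log s)$. Hard-wiring the second argument to the fixed string $x_i$ yields a machine $M_i$ on inputs of length $|E_{n,s}|$ that still runs in time $O(t)$ and computes $M_i(E_{n,s}) = C(x_i)$. Applying the Cook–Levin simulation (\cref{lem:tm_to_ckt}) to $M_i$ produces a circuit $D_i$ of size $O(t\log t)$. Here the key quantitative point is to use the \emph{tight} bound $t = O(s^2\log s)$, so that $\log t = O(\log s)$ and hence $t\log t = O(s^2\log^2 s)$; writing $m := |E_{n,s}| = 5 s\log s$ this is $O(m^2)$, with \emph{no} extra logarithmic factor.

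Now I would count. There are $|E_{n,s}|+1 = O(s\log s) = O(m)$ circuits $D_i$, so
\[
\mathrm{size}(\cc{PTT}_{n,s}) \;=\; (|E_{n,s}|+1)\cdot O(s^2\log^2 s) \;=\; O(s^3\log^3 s) \;=\; O(m^3) \;=\; O(|E_{n,s}|^3),
\]
since wiring the $D_i$'s together (relabelling gates and merging the input layer) costs only a constant factor. For the uniform construction time: computing all $x_i$ costs $O(m\cdot n) = O(m^2)$; producing each $D_i$ from $M_i$ by the standard Cook–Levin construction takes time linear (up to the usual overhead absorbed in the $O(\cdot)$) in the size $O(s^2\log^2 s)$ of $D_i$; doing this for all $O(m)$ indices and assembling the result therefore takes $O(m)\cdot O(s^2\log^2 s) = O(s^3\log^3 s) = O(|E_{n,s}|^3)$ time total.

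The main obstacle is the fine-grained bookkeeping rather than any conceptual difficulty: one must (i) invoke the sharp running-time bound of \cref{lem:tm_evaluate_ck} (using the looser estimate $t = O(|E_{n,s}|^2)$ would re-introduce a spurious $\log|E_{n,s}|$ via $\log t$), and (ii) rely on the Cook–Levin transformation being constructive in time essentially linear in the size of the circuit it outputs. Note also that the alternative of compiling a \emph{single} machine that loops the evaluation $|E_{n,s}|+1$ times runs in time $\Theta(|E_{n,s}|^3)$, and Cook–Levin would then yield a circuit of size $\Theta(|E_{n,s}|^3\log|E_{n,s}|)$ — an extra logarithmic factor — so the decomposition into one-output-bit circuits $D_i$ is what keeps the bound at $O(|E_{n,s}|^3)$. (Invalid inputs $w\in\bit^{|E_{n,s}|}$ that do not encode a legal circuit are handled by the same convention used in \cref{def:PTT}/\cref{lem:tm_evaluate_ck} and do not affect the size analysis.)
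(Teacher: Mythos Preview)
Your proposal is correct and follows essentially the same approach as the paper: apply the Cook--Levin simulation (\cref{lem:tm_to_ckt}) to the evaluator $M$ of \cref{lem:tm_evaluate_ck} to obtain a single-output circuit of size $O(s^2\log^2 s)=O(|E_{n,s}|^2)$, then take $|E_{n,s}|+1$ copies, one per output bit. The only cosmetic difference is that you hard-wire $x_i$ into $M$ \emph{before} applying Cook--Levin, whereas the paper applies Cook--Levin once and then copies; your version is somewhat more careful about the uniform construction time, which the paper leaves implicit.
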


\begin{proof}
     Let $M$ be the multi-tape Turing machine from \cref{lem:tm_evaluate_ck} that takes as input an encoding of a circuit and a bitstring, and evaluates the circuit on that bitstring. Let $C$ be the circuit generated from \cref{lem:tm_to_ckt} that simulates $M$. Then $\cc{SIZE}(C) = O(s^2 \log^2 s) = O(|E_{n,s}|^2)$. Making $|E_{n,s}| + 1$ copies of $C$ for each output gate gives a circuit of size $O(|E_{n,s}|^3)$.
    
\end{proof}

\begin{theorem}[\cite{li2023symmetric,chen2023symmetric}]\label{Alg:FS2P}
    There exists a single-valued $\cc{F}\St$ algorithm for $\cc{Avoid}$. Moreover, on input circuit $C:\bit^n \rightarrow \bit^{n+1}$, the algorithm runs in time $O(n|C|)$\footnote{the running time was implicit in the proof of \cite{li2023symmetric}, but easy to verify.}.
\end{theorem}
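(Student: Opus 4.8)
The plan is to reconstruct the single-valued $\FSt$ algorithm of \cite{chen2023symmetric,li2023symmetric}. Fix a circuit $C:\bit^n\to\bit^{n+1}$; since $|\mathrm{Range}(C)|\le 2^n<2^{n+1}$ an avoided point exists, and the goal is to single out a \emph{canonical} such point $y_C\notin\mathrm{Range}(C)$ and certify it by a two-prover protocol satisfying the definition of a single-valued $\FSt$ algorithm. The backbone is the greedy \emph{bit-fixing} construction of an avoided point: for a prefix $p\in\bit^{\le n+1}$ let $N_p:=\#\{x\in\bit^n:C(x)\text{ has prefix }p\}$, and maintain the invariant $N_p<2^{n+1-|p|}$. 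For the empty prefix this reads $2^n<2^{n+1}$, and since $N_{p0}+N_{p1}=N_p$ some bit $b\in\bit$ satisfies $N_{pb}\le N_p/2<2^{n-|p|}$, so appending it preserves the invariant; after $n+1$ steps we reach $y\in\bit^{n+1}$ with $N_y<1$, i.e.\ $N_y=0$, so $y\notin\mathrm{Range}(C)$. To make the output canonical I would fix in advance a tie-breaking rule at every node (prefer $b=0$ whenever it can be certified, and among certificates prefer the first in a fixed enumeration of a pairwise-independent hash family), which pins down a unique root-to-leaf path and hence a unique $y_C$ that an all-powerful prover can compute.

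The heart of the matter is implementing the per-step test inside an $\St$ predicate, since the counts $N_{p0},N_{p1}$ are $\#\cc{P}$-hard to compute exactly. For this I would use the two-prover structure together with pairwise-independent hashing in the spirit of approximate counting, exploiting a basic asymmetry: certifying an \emph{upper} bound of the form ``$N_{pb}$ is small'' can be made to have a short certificate (a suitable hash $h$ supplied by the YES-prover) whose \emph{refutation} is an $\NP$-type object --- one or two inputs $x$ (resp.\ $x\ne x'$) with $C(x)$ (and $C(x')$) having prefix $pb$ and exhibiting the relevant collision or hit --- which the verifier checks with a constant number of evaluations of $C$. Whenever $N_{pb}$ genuinely obeys a slightly stronger bound (afforded by a small amount of slack, e.g.\ by running $O(n)$ rounds instead of exactly $n+1$, or by widening the hash range by a constant number of bits) a good fraction of hashes work, so the honest YES-prover can always produce an irrefutable certificate and the greedy process never stalls; hence the canonical path is always well-defined. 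The YES-prover's message $\pi_1$ then lists the canonical (bit, hash) pairs along the path together with $y_C$, and the NO-prover's message $\pi_2$ is its best attempt to refute the certificate at some level. The verifier re-derives the canonical choice level by level, outputting $y_C$ if no level's certificate is refuted and $\bot$ (or whatever canonical prefix it can still justify) otherwise. The standard bookkeeping then establishes both clauses of the single-valued $\FSt$ definition: an honest $\pi_1$ outputs $y_C$ against every $\pi_2$, while against any $\pi_1$ there is a $\pi_2$ exposing the first level where $\pi_1$ deviates or submits a refutable certificate, forcing the output into $\{y_C,\bot\}$; and when the output is not $\bot$ it is always a genuine avoided point.

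For the running time, the predicate performs $O(n)$ levels of work, and at each level it evaluates $C$ a constant number of times (to check that the claimed witnesses land in $\{x:C(x)\text{ has prefix }pb\}$) and evaluates a pairwise-independent hash, at cost $O(|C|)+\poly(n)=O(|C|)$ per level; hence the overall running time is $O(n|C|)$ with witness length $\poly(n)$, as claimed. (An alternative route first applies a stretch-amplification reduction in the style of \cite{korten2022hardest} and derandomizes a probabilistic-method argument on the resulting instance, but this is less fine-grained and does not transparently yield $O(n|C|)$.)

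The main obstacle --- and the genuinely delicate point --- is exactly this count-certification step: arranging that ``$N_{pb}$ small'' always has a short irrefutable certificate \emph{and} that ``$N_{pb}$ large'' always has a short refutation against \emph{every} hash the prover might choose, all while keeping the final answer single-valued. This balancing act is the technical core of \cite{chen2023symmetric,li2023symmetric}, whose treatment I would follow; everything else above is routine.
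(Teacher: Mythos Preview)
First, note that the paper does not itself prove this statement; it is imported from \cite{chen2023symmetric,li2023symmetric} with no accompanying argument, so there is no in-paper proof to compare against. What follows compares your sketch to what those references actually do.

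Your bit-fixing scaffold and the invariant $N_p<2^{\,n+1-|p|}$ are correct, and you are right that the count-certification step is the crux. The trouble is that the specific mechanism you propose---pairwise-independent hashing with the canonical rule ``take the first pair $(b,h)$ in a fixed enumeration for which $h$ is collision-free on the $pb$-preimages''---does not deliver single-valuedness. At any node where both $N_{p0}$ and $N_{p1}$ lie below the threshold (which is the generic situation once you build in slack), collision-free hashes exist for \emph{both} bits. A cheating $\pi_1$ can then submit a non-canonical pair $(b',h')$ with $h'$ genuinely injective on the $pb'$-preimages. Your $\pi_2$ cannot refute it by exhibiting a collision (there is none), and ``exposing the deviation'' would mean certifying that some \emph{earlier} pair is valid, i.e.\ that a particular hash is collision-free on a set---a $\coNP$-type assertion with no short witness. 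Since the two proofs are fixed simultaneously and the hash family is exponential, $\pi_2$ also cannot pre-package refutations for every pair preceding the canonical one. The verifier therefore follows $\pi_1$'s deviant branch and outputs some $y\ne y_C$, violating condition~2 of the single-valued $\cc{F}\St$ definition. Saying you would ``follow the treatment'' of the cited papers at this step does not close the gap, because those papers do not certify counts via hashing.

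The construction in \cite{li2023symmetric} is in fact closer to the ``alternative route'' you set aside. It canonicalises through (lexicographically extremal) \emph{preimages} rather than approximate counts: after Korten-style stretch amplification, whenever the two provers' transcripts diverge the disagreement localises to a claim of the form ``$x$ is the designated preimage of a specific string under (an iterate of) $C$,'' which the verifier settles with $O(1)$ evaluations of $C$ plus a lexicographic comparison---exactly the kind of dispute an $\St$ verifier can adjudicate. This is what makes the output single-valued, and since there are $O(n)$ such steps it is also what yields the $O(n|C|)$ running time you were worried the alternative route would forfeit.
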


\begin{theorem}[\cite{li2023symmetric,chen2023symmetric}]\label{thm:S2E_LB}
    There exists an explicit language $L \in \Se \setminus i.o.\text{-}\cc{SIZE}[2^n/n]$.
\end{theorem}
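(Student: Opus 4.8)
\subsection*{Proof proposal for \texorpdfstring{\cref{thm:S2E_LB}}{Theorem (S2E lower bound)}}

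The plan is to instantiate the Range Avoidance framework directly on the truth-table generator circuit and feed it to the single-valued $\FSt$ algorithm of \cref{Alg:FS2P}. For each sufficiently large $n$, choose $s = s(n)$ to be the largest integer for which the circuit $\cc{TT}_{n,s}\colon \bit^{L_{n,s}} \to \bit^{2^n}$ of \cref{def:TT} is input-expanding, i.e.\ $L_{n,s} = (s+1)\bigl(7 + \log(n+s)\bigr) < 2^n$; a routine estimate gives $s(n) = \Theta(2^n/n)$ and in particular $s(n) \ge 2^n/n$ for all sufficiently large $n$. Applying the single-valued $\FSt$ algorithm for $\cc{Avoid}$ to $\cc{TT}_{n,s(n)}$ produces, by single-valuedness, a \emph{canonical} string $T_n \in \bit^{2^n}$ lying outside the range of $\cc{TT}_{n,s(n)}$. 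Define the language $L$ by $\mathcal{X}_{L\restrict{n}} := T_n$ for all large $n$ (and $L\restrict{n} := \emptyset$ for the finitely many remaining $n$). Since $T_n$ is not in the range of $\cc{TT}_{n,s(n)}$, it is not the truth table of any $n$-input circuit of size at most $s(n)$; as $s(n) \ge 2^n/n$ this gives $L\restrict{n} \notin \cc{SIZE}[2^n/n]$ for every large $n$, hence $L \notin i.o.\text{-}\cc{SIZE}[2^n/n]$.

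It remains to place $L$ in $\Se$ and to note that it is explicit. By \cref{def:TT} the circuit $\cc{TT}_{n,s(n)}$ is uniformly constructible in time $\poly(2^n)$, has size $2^{O(n)}$, and has at most $2^n$ inputs; consequently the $\FSt$ algorithm of \cref{Alg:FS2P} runs on $\cc{TT}_{n,s(n)}$ in time $\poly(|\cc{TT}_{n,s(n)}|) = 2^{O(n)}$, and so does its verifier $V_A$. To decide whether an input $x$ with $|x| = n$ lies in $L$: construct $\cc{TT}_{n,s(n)}$ and run the $\St$-style predicate induced by $V_A$ that, given certificates $(\pi_1,\pi_2)$, simulates $V_A$ and returns the $x$-th bit of the output string (returning $0$ whenever $V_A$ outputs $\bot$). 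By the standard fact — the same one underlying the $\St$ circuit lower bounds of \cite{chen2023symmetric,li2023symmetric} — that a single-valued $\FSt$ algorithm solving a (total) search problem induces an $\St$ decision procedure for each output bit, this predicate witnesses $L \in \Se$ in our linear-exponential-time regime. Since the only $n$-dependence is through $s(n)$ and the uniform construction of $\cc{TT}_{n,s(n)}$, the language $L$ is explicit.

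The one substantive ingredient is \cref{Alg:FS2P} itself — the single-valued $\FSt$ algorithm for $\cc{Avoid}$ of \cite{chen2023symmetric,li2023symmetric} — which we use as a black box; everything else is calibration. The delicate point I anticipate is the passage from a single-valued $\FSt$ search algorithm to an $\St$/$\Se$ \emph{decision} procedure: the two certificate quantifiers in \cref{def:St} are not symmetric with respect to the $\bot$-output, so one must fix the convention that $\bot$ is read as $0$ and then verify both the yes- and no-directions of \cref{def:St} for the induced predicate. Once this is in place, the almost-everywhere and explicitness claims are immediate from tracking that the construction depends on $n$ only through $s(n) = \Theta(2^n/n)$ and the $\poly(2^n)$-time construction of $\cc{TT}_{n,s(n)}$; the parameter estimate $L_{n,s(n)} < 2^n$ (with the leading constant in the size bound governed by the exact choice of $s(n)$) is the only other calculation, and it is routine.
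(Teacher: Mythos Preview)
Your proposal is correct and follows essentially the same approach as the paper: feed the truth-table generator circuit $\cc{TT}_{n,s}$ (with $s = \Theta(2^n/n)$ so that the circuit is input-expanding) to the single-valued $\FSt$ algorithm of \cref{Alg:FS2P}, and define $L$ by reading off bits of the canonical output string. The paper's proof is terser---it simply writes $\cc{TT}_n\colon\bit^{2^n-1}\to\bit^{2^n}$ and defines $L$ via the $(x+1)$-th bit of $f_{|x|}$---whereas you spell out the parameter choice, the almost-everywhere argument, and the passage from single-valued $\FSt$ search to an $\Se$ decision predicate (including the $\bot$ convention), all of which the paper leaves implicit.
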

\begin{proof}
    For any $n \in \Z$, let $\cc{TT}_n:\bit^{2^n-1} \rightarrow \bit^{2^n}$ be the truth table generator circuit. Let $f_n \in \bit^{2^n}$ be the canonical solution output by the single-valued algorithm from \cref{Alg:FS2P} on input $\cc{TT}_n$. 
    
    The hard language $L$ is defined as follows: for any $x \in \bit^*$, $x \in L$ if and only if the $(x+1)$-th bit of $f_{|x|} = 1$, treating $x$ as an integer from $0$ to $2^n-1$.
\end{proof}

\subsection{Sparse Languages}\label{sec:prelim_sparsity}
We define some notions of sparsity below, we first introduce natural definitions of sparsity and \emph{sparse extensions} in the polynomial regime, and then give their generalizations in the fine-grained setting.

\begin{definition}
    A language $L \in \cc{SPARSE}$ if for all $n$, $\left| L \cap \bit^n \right| \leq \cc{poly}(n)$. Moreover, $L$ is called \emph{uniformly-sparse} if $L \in \P \cap \cc{SPARSE}$.
\end{definition}

It is easy to see that $\cc{SPARSE} \subseteq \PPoly$. That is, one can identify the yes-instances efficiently, albeit in non-uniform fashion. The purpose of introducing the \emph{uniform-sparsity} is to be able to identify these inputs efficiently in a uniform fashion. Unfortunately, we cannot expect any such language $L$ to lie even in a modestly hard class as, by definition, $L \in \P$. The purpose of the \emph{uniformly-sparse extensions}, on the other hand, is to bridge this gap.
One can observe that unlike the \emph{uniformly-sparse} languages, which are contained in $\P$, languages with uniformly-sparse extension can even be undecidable! In particular, any unary language has uniformly-sparse extension in form of $1^*$.

\begin{definition}
\label{def:uniformly-sparse extension}
    A language $L$ has a \emph{uniformly-sparse extension}, if there exists a $L'$ s.t. :
        \begin{enumerate}
            \item $L \subseteq L'$
            \item $L'$ is \emph{uniformly-sparse}
        \end{enumerate}
\end{definition}

Generalizing the above definitions in the fine-grained setting, we get:

\begin{definition}
    Let $t : \N \rightarrow \N$ be a computable function, then a language $L$ is $t(n)$-$\cc{SPARSE}$ if for all $n,\left| L \cap \bit^n \right| = O(t(n))$. Moreover we say that $L$ is \emph{$t(n)$-uniformly-sparse} if $L \in \cc{TIME}[t(n)] \cap t(n)$-$\cc{SPARSE}$.
\end{definition}

\begin{definition}
    $L$ has a \emph{$t(n)$-uniformly-sparse extension}, if there exists a $L'$ s.t.:
    \begin{enumerate}
        \item $L \subseteq L'$
        \item $L'$ is \emph{$t(n)$-uniformly-sparse}.
    \end{enumerate}
\end{definition}

\noindent Observe that \textbf{every} binary language $L$ is $2^n$-$\cc{SPARSE}$. Furthermore, every such $L$ has a trivial $2^n$-uniformly-sparse extension: $\bit^*$.

\section{Lower Bounds \& Hierarchy Theorem }

In this section, we first present (\cref{thm:S2_LB})  a fine-grained, parameterised version of \cref{thm:S2E_LB}. This allows us to use the $\cc{Avoid}$ framework and get circuit lower bounds in $\Stt[t(n)]$ instead of $\Se$. We then observe that our $\Stt[t(n)]$ witness is oblivious of the input, and hence the lower bounds we get are actually in $\Ott[t(n)]$ as highlighted in \cref{thm:O2_LB}. 

In \cref{thm:O2_heirarchy} and \cref{thm:O2_heirarchy_fine_grained} we present the first time hierarchy theorem for $\Ot$. In fact, we note to the best of our knowledge that this is the first known time hierarchy theorem for a semantic class.

\begin{theorem}\label{thm:S2_LB}
    For $n \in \N$, let $t:\N \rightarrow \N$ be a time-constructible function, s.t. $t(n) > n \geq 12$ then  $$\Stt[t(n)]\not\subseteq i.o.\text{-}\SIZE\left[\frac{t(n)^{1/4}}{\log(t(n))}\right] \; .$$
\end{theorem}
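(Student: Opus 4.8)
The plan is to instantiate the Range Avoidance framework described in the overview with the prefix truth table generator circuit $\cc{PTT}_{n,s}$, choosing the size parameter $s$ as a function of the time budget $t(n)$. Concretely, given $t(n)$, I would set $s = s(n)$ to be the largest value (up to constants) for which the $\cc{F}\St$ algorithm of \cref{Alg:FS2P}, run on input $\cc{PTT}_{n,s}$, completes within $O(t(n))$ steps. By \cref{lem:ptt_size}, $\cc{PTT}_{n,s}$ has size $O(|E_{n,s}|^3) = O((s\log s)^3)$ and is uniformly constructible in that same time; by \cref{Alg:FS2P} the avoid algorithm on a circuit $C$ runs in time $O(|C|\cdot \log|C|)$-ish (here linear in the input length times $|C|$), so the total running time is polynomial in $s\log s$, say $O((s\log s)^{c})$ for a fixed small constant $c$. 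Setting $s\log s \approx t(n)^{1/c}$ then makes the whole construction run in $O(t(n))$ time, so $s = \Theta(t(n)^{1/c}/\log t(n))$ up to constants; the constant $c$ is what ultimately determines the $1/4$ exponent in the statement, so I would need to track it carefully through \cref{lem:ckt_encoding}, \cref{lem:tm_evaluate_ck}, \cref{lem:tm_to_ckt} and \cref{lem:ptt_size}.

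Next I would define the hard language $L$ exactly as in the overview: let $f_{n,s} = \cc{Avoid}(\cc{PTT}_{n,s}) \in \bit^{|E_{n,s}|+1}$ be the canonical string output by the single-valued algorithm, and set $\mathcal{X}_{L\restrict n} := f_{n,s}\,\|\,0^{2^n - |E_{n,s}| - 1}$, i.e.\ $x \in L$ iff $x$ (read as an integer) is among the first $|E_{n,s}|+1$ indices and the corresponding bit of $f_{n,s}$ is $1$. I need $|E_{n,s}| = 5s\log s < 2^n$, which holds because $s$ is at most roughly $t(n)^{1/c}$ and we may assume $t$ is, say, at most $2^{n}/n^{10}$ — and in fact for the theorem as stated $t(n)$ can be anything time-constructible, but the bound $\SIZE[t(n)^{1/4}/\log t(n)]$ is only meaningful/nontrivial when $t(n) \le 2^{O(n)}$, so this side condition is harmless; I would state it explicitly if needed. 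Membership $L \in \Stt[O(t(n))]$ follows because the $\cc{S_2}$-verifier on input $x$ of length $n$ first (deterministically, obliviously) constructs $\cc{PTT}_{n,s}$, then runs the $\cc{F}\St$ verifier $V_A$ on $(\cc{PTT}_{n,s}, \pi_1, \pi_2)$ to recover $f_{n,s}$ (or $\bot$), reading off the relevant bit; single-valuedness of the avoid algorithm is exactly what makes $L$ well-defined and makes this a valid $\cc{S_2}$ predicate. The time is dominated by constructing and evaluating on $\cc{PTT}_{n,s}$, which is $O((s\log s)^c) = O(t(n))$.

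The lower bound direction is the heart of it and, I expect, the main place to be careful rather than the main conceptual obstacle. By the defining property of $\cc{Avoid}$, $f_{n,s}$ is not in the range of $\cc{PTT}_{n,s}$, so no $n$-input circuit of size $s$ computes a function whose truth table has $f_{n,s}$ as its length-$(|E_{n,s}|+1)$ prefix; since $\mathcal{X}_{L\restrict n}$ has exactly that prefix, $L\restrict n$ cannot be computed by any circuit of size $s$. This gives an almost-everywhere bound (it holds for every sufficiently large $n$, hence beats $i.o.\text{-}\SIZE$), since the argument is uniform in $n$. Finally I would do the arithmetic: with $s = \Theta(t(n)^{1/c}/\log t(n))$ and $c = 4$ (after bookkeeping — the cube in \cref{lem:ptt_size} combined with the near-linear cost of the avoid algorithm and the $\log$ factors should land at $c=4$, which is presumably why the authors chose the exponent $1/4$), we get $L \notin i.o.\text{-}\SIZE[t(n)^{1/4}/\log t(n)]$, i.e.\ $\Stt[t(n)] \not\subseteq i.o.\text{-}\SIZE[t(n)^{1/4}/\log t(n)]$.

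The main obstacle is purely the fine-grained accounting: earlier parts of the literature leave Circuit-Eval at $\poly$, so I must rely on the explicit $O(s^2\log s)$ bound of \cref{lem:tm_evaluate_ck} and the $t\log t$ blowup of \cref{lem:tm_to_ckt} to pin down that $|\cc{PTT}_{n,s}|$ is $O((s\log s)^3)$ and then compose with the $O(n|C|)$ running time of \cref{Alg:FS2P} — getting a clean exponent $4$ (rather than something like $3$ or $3.5$) out of this composition, while absorbing all $\log$ factors into the single $\log t(n)$ in the denominator, is the one step that needs genuine care. Everything else (obliviousness, which upgrades this to \cref{thm:O2_LB}; well-definedness of $L$ from single-valuedness) is routine given the tools already set up in the Preliminaries.
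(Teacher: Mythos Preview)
Your proposal is correct and takes essentially the same approach as the paper: set $s = \lfloor t(n)^{1/4}/\log t(n)\rfloor$, run the single-valued $\cc{F}\St$ algorithm of \cref{Alg:FS2P} on $\cc{PTT}_{n,s}$ to obtain a canonical $f_n$, and define $\mathcal{X}_{L_t\restrict n} := f_n\,\|\,0^{2^n - |E_{n,s}|-1}$; the running time $O(|E_{n,s}|\cdot|\cc{PTT}_{n,s}|) = O((s\log s)^4) = O(t(n))$ gives $L_t\in\Stt[t(n)]$, and $f_n\notin\mathrm{Image}(\cc{PTT}_{n,s})$ gives $L_t\notin i.o.\text{-}\SIZE[s]$. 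Your hesitation about the exponent is unnecessary --- the cube from \cref{lem:ptt_size} times the extra $|E_{n,s}|$ factor from the $O(n|C|)$ bound in \cref{Alg:FS2P} lands cleanly at $(s\log s)^4$, exactly as you suspected.
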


\begin{proof}
    We construct a language $L_t \in \Stt[t(n)]$ and $L_t \not\subseteq i.o.\text{-}\SIZE\left[\frac{t(n)^{1/4}}{\log(t(n))}\right]$.
    
    For any $n \in \N$, let $s = \floor{\frac{t(n)^{1/4}}{\log (t(n))}}$ and $|E_{n,s}| = \ceil{5s \log s}$. Let $\cc{PTT}_{n,s}:\bit^{|E_{n,s}|} \rightarrow \bit^{|E_{n,s}|+1}$ be the prefix truth table generator circuit as in \cref{def:PTT}. Let $f_{n} \in \bit^{|E_{n,s}|+1}$ be the canonical solution to $\cc{Avoid}(\cc{PTT_{n,s}})$ as outputted by the single-valued algorithm from \cref{Alg:FS2P}. 
    
    The hard language $L_t$ is defined as follows: for any $n \in \Z$, the characteristic string of $L_t\restrict{n}$ is set to be $\mathcal{X}_{L_t\restrict{n}} := f_n || 0^{2^n - |E_{n,s}| - 1}$.

    By definition of $\cc{PTT}_{n,s}$ and the fact that $f_n \notin \cc{Image}(\cc{PTT}_{n,s})$, we have that $L_t \notin i.o.\text{-}\SIZE\left[ s \right]$. On the other hand, the single-valued algorithm for finding $f_n$ runs in time $O(|E_{n,s}| \cdot |\cc{PTT}_{n,s}|) = O(t(n))$. Hence, $L_t\in \Stt[t(n)]$. 
\end{proof}

We make the observation that the witness in the $\Stt$ machine above is oblivious to the actual input $x$.
\begin{theorem}\label{thm:O2_LB}
    For $n \in \N$, let $t:\N \rightarrow \N$ be a time-constructible function, s.t. $t(n) > n \geq 12$ then 
    $$\Ott[t(n)]\not\subseteq i.o.\text{-}\SIZE\left[\frac{t(n)^{1/4}}{\log(t(n))}\right].$$
\end{theorem}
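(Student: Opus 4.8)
The plan is to observe that \cref{thm:S2_LB} already does essentially all the work: the only thing left to verify is that the witness used there is \emph{oblivious}, i.e.\ depends on $x$ only through $|x|$. So I would first recall the language $L_t$ constructed in the proof of \cref{thm:S2_LB}: for each input length $n$ we set $s = \lfloor t(n)^{1/4}/\log t(n)\rfloor$, form the prefix truth table generator circuit $\cc{PTT}_{n,s}$, run the single-valued $\cc{F}\St$ algorithm of \cref{Alg:FS2P} on it to get the canonical string $f_n \in \bit^{|E_{n,s}|+1}$, and define $\mathcal{X}_{L_t\restrict{n}} := f_n \| 0^{2^n - |E_{n,s}|-1}$.

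Next I would exhibit an explicit $\Ott[t(n)]$-predicate $P(x,y,z)$ for $L_t$. On input $x$ with $|x| = n$, the verifier should (i) compute $s$ and $|E_{n,s}|$ and the circuit $\cc{PTT}_{n,s}$ (all in time $O(t(n))$ by \cref{lem:ptt_size}), (ii) treat $x$ as an integer in $\{0,\dots,2^n-1\}$ and, if $x \geq |E_{n,s}|+1$, immediately reject (since those bits of the characteristic string are $0$), and (iii) otherwise run the verifier $V_A$ of the single-valued $\cc{F}\St$ algorithm from \cref{Alg:FS2P} on $\cc{PTT}_{n,s}$ with certificates $y$ and $z$, obtaining an output string in $\bit^{|E_{n,s}|+1}$ (or $\bot$), and accept iff the $(x{+}1)$-st bit of that string equals $1$ (rejecting on $\bot$, say). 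The key point is that the \emph{entire} computation below the top level — the choice of $s$, the circuit $\cc{PTT}_{n,s}$, and hence the canonical solution $f_n$ the algorithm targets — is a function of $n$ alone, not of the particular string $x$. Therefore the single good YES-certificate $\pi_1$ and, for each $\pi_1$, the relevant NO-certificate of the $\cc{F}\St$ algorithm become, after fixing $n$, \emph{common} to all $x \in \bit^n$. Concretely: let $\pi_1^*$ be a YES-certificate witnessing that the algorithm on $\cc{PTT}_{n,s}$ outputs $f_n$ for every $\pi_2$; set $\mathbf{y^*} := \pi_1^*$. Then for every $x \in L_t\restrict{n}$ and every $z$, $P(x,\mathbf{y^*},z)$ recovers $f_n$ and reads off a $1$-bit, so it accepts. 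Symmetrically, by the second condition in the definition of a single-valued $\cc{F}\St$ algorithm, for $x \notin L_t\restrict{n}$ one can pick a NO-certificate $\mathbf{z^*}$ (depending only on $n$, e.g.\ picking any $\pi_2$ that forces output in $\{f_n,\bot\}$ against the all-zeros $\pi_1$, then noting single-valuedness forces the same behavior) so that $P(x,y,\mathbf{z^*}) = 0$ for all $y$; and for $x \geq |E_{n,s}|+1$ the predicate rejects unconditionally, so any $\mathbf{z^*}$ works there. Hence $L_t \in \Ott[t(n)]$, while $L_t \notin i.o.\text{-}\SIZE[s]$ exactly as in \cref{thm:S2_LB}.

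The main obstacle — really the only subtle point — is handling the NO-side certificate correctly. In the single-valued $\cc{F}\St$ definition, the NO-side guarantee is that \emph{for every} $\pi_1$ there \emph{exists} a $\pi_2$ forcing the output into $\{y_x,\bot\}$; a priori this $\pi_2$ depends on $\pi_1$, which threatens obliviousness when we try to reject $x \notin L_t\restrict{n}$ uniformly in $y$. I would resolve this by noting that a valid $\Ott$ predicate only needs a \emph{single} $\mathbf{z^*}$ that beats \emph{all} $y$; since the $\cc{F}\St$ algorithm's canonical output $f_n$ is a fixed string independent of the certificates, and the bit we read is $0$ precisely when $x \notin L_t\restrict{n}$, \emph{any} execution that outputs $f_n$ or $\bot$ already yields rejection — so it suffices to observe that for $x\notin L_t\restrict n$ the predicate $P(x,y,z)$ rejects whenever the underlying algorithm does not output a string whose $(x{+}1)$-st bit is $1$, and to design $P$ to reject on $\bot$ and on any output disagreeing with $f_n$ at that coordinate; then correctness on the NO side follows from condition~2 of the single-valued definition applied at length $n$, with $\mathbf{z^*}$ taken to be the certificate it supplies (which depends only on $n$, since $\cc{PTT}_{n,s}$ does). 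Everything else is a routine transcription of the proof of \cref{thm:S2_LB}, so the statement follows.

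\begin{proof}
Immediate from the proof of \cref{thm:S2_LB}: the language $L_t$ constructed there is decided by the $\Stt[t(n)]$-predicate that, on input $x$ of length $n$, computes $s=\lfloor t(n)^{1/4}/\log t(n)\rfloor$ and the circuit $\cc{PTT}_{n,s}$, runs the single-valued $\cc{F}\St$ verifier $V_A$ of \cref{Alg:FS2P} on $\cc{PTT}_{n,s}$ with certificates $y,z$, and accepts iff $x<|E_{n,s}|+1$ and the $(x{+}1)$-st bit of the resulting output string is $1$ (rejecting on $\bot$ or on any output disagreeing with the canonical $f_n$ at that coordinate). Since $s$, $|E_{n,s}|$ and $\cc{PTT}_{n,s}$ depend only on $n$, so does the canonical solution $f_n$ that the algorithm targets; hence the YES-certificate $\pi_1^*$ forcing output $f_n$ against all $\pi_2$, and the NO-certificate supplied by condition~2 of the single-valued definition, are common to all $x\in\bit^n$. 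Setting $\mathbf{y^*}:=\pi_1^*$ and $\mathbf{z^*}$ to this common NO-certificate, the predicate witnesses $L_t\in\Ott[t(n)]$. The lower bound $L_t\notin i.o.\text{-}\SIZE[s]$ is exactly as in \cref{thm:S2_LB}.
\end{proof}
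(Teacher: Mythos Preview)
Your proof is correct and takes exactly the paper's approach: both simply observe that the $\cc{F}\St$ algorithm from \cref{Alg:FS2P} is invoked on $\cc{PTT}_{n,s}$, which depends only on $n=|x|$, so its certificates are oblivious and $L_t\in\Ott[t(n)]$. Your extended discussion of the NO-side certificate is more careful than the paper's one-line argument, but note that the quantifier subtlety you raise (condition~2 of the single-valued definition gives $\forall\pi_1\,\exists\pi_2$, not $\exists\pi_2\,\forall\pi_1$) is already implicit in the passage from single-valued $\cc{F}\St$ to $\Stt$ in \cref{thm:S2_LB} and is orthogonal to the obliviousness being established here.
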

\begin{proof}
    Consider the same language $L_t$ in the proof of \cref{thm:S2_LB}. Notice that for any input $x$ of the same length $n$, the $\cc{F}\St$ algorithm is run on the same circuit $\cc{PTT}_{n,s}$ and hence the witness is the same for inputs of the same length. Thus, it follows that $L_t \in \Ott[t(n)]$.
\end{proof}

We now get as a corollary a proof of \cref{THM:main1}.

\begin{corollary}\label{cor:fixed_poly_direct}
    For all $k \in \N$, there exists an explicit language $L_k \in \Ot$ s.t. $L_k \not\subseteq \cc{SIZE}[n^k]$.
\end{corollary}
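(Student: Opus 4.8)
The plan is to derive \cref{cor:fixed_poly_direct} from \cref{thm:O2_LB} by instantiating the time-constructible function $t$ appropriately. Fix $k \in \N$. I want an explicit language in $\Ot$ that is not in $\SIZE[n^k]$, so I need the circuit-size lower bound $\frac{t(n)^{1/4}}{\log t(n)}$ guaranteed by \cref{thm:O2_LB} to dominate $n^k$. Setting $t(n) := n^{4k+5}$ (or any fixed polynomial of degree exceeding $4k$, with a little slack to absorb the $\log$ factor) makes $\frac{t(n)^{1/4}}{\log t(n)} = \frac{n^{k + 5/4}}{(4k+5)\log n} = \omega(n^k)$, so that $\SIZE[n^k] \subseteq i.o.\text{-}\SIZE\!\left[\frac{t(n)^{1/4}}{\log t(n)}\right]$ for all sufficiently large $n$. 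Since $t$ is a fixed polynomial it is time-constructible, and $t(n) > n \ge 12$ holds for large $n$, so \cref{thm:O2_LB} applies.

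Concretely, the steps are: (i) given $k$, define $t_k(n) := n^{4k+5}$ and verify it is time-constructible with $t_k(n) > n$ for $n \ge 12$; (ii) invoke \cref{thm:O2_LB} with $t = t_k$ to obtain the explicit language $L_{t_k} \in \Ott[t_k(n)] \subseteq \Ot$ (the latter containment because $t_k$ is a polynomial) with $L_{t_k} \notin i.o.\text{-}\SIZE\!\left[\frac{t_k(n)^{1/4}}{\log t_k(n)}\right]$; (iii) observe that $n^k = o\!\left(\frac{t_k(n)^{1/4}}{\log t_k(n)}\right)$, hence $\SIZE[n^k] \subseteq i.o.\text{-}\SIZE\!\left[\frac{t_k(n)^{1/4}}{\log t_k(n)}\right]$, so $L_{t_k} \notin \SIZE[n^k]$; (iv) set $L_k := L_{t_k}$ and note it is explicit because the construction in \cref{thm:S2_LB}/\cref{thm:O2_LB} — running the single-valued $\FSt$ algorithm of \cref{Alg:FS2P} on the uniformly-constructible circuit $\cc{PTT}_{n,s}$ and padding with zeros — is fully uniform.

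There is really no main obstacle here: all the work was done in \cref{thm:S2_LB} and \cref{thm:O2_LB}, and this corollary is just a parameter substitution. The only point requiring (minor) care is bookkeeping the exponents so that the fourth-root-with-a-$\log$-loss still beats $n^k$ — taking the polynomial degree to be $4k+5$ rather than exactly $4k$ gives enough room that $\frac{n^{k+5/4}}{\Theta(\log n)}$ is comfortably $\omega(n^k)$. One should also recall explicitly that membership in $\Ott[t(n)]$ for a polynomial $t$ implies membership in $\Ot$, which is immediate from \cref{def:Ot}. With these observations in place the corollary, and hence \cref{THM:main1}, follows.
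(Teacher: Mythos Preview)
Your proposal is correct and follows essentially the same approach as the paper: instantiate \cref{thm:O2_LB} with a suitable polynomial $t$ (the paper takes $t(n)=n^{5k}$, you take $t(n)=n^{4k+5}$) so that $\frac{t(n)^{1/4}}{\log t(n)}$ dominates $n^k$, and note that $\Ott[t(n)]\subseteq\Ot$ for polynomial $t$.
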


\begin{proof}
    Fix $t(n) = n^{5k}$. Then there is an explicit hard language $L_t$ as defined in the proof of \cref{thm:S2_LB}, such that $L_t \not\subseteq \cc{SIZE}[n^k]$. Moreover, by \cref{thm:O2_LB} we have that $L_t \in \Ot$.
\end{proof}

Before proving our hierarchy theorem for $\Ott$, we prove a simple lemma that bounds from above the size of a circuit family computing languages in $\Ott$.

\begin{lemma}\label{lem:O2_UB}
    $\Ott[t(n)] \subseteq \SIZE[t(n)\log(t(n))]$. 
\end{lemma}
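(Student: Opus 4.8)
The plan is to prove $\Ott[t(n)] \subseteq \SIZE[t(n)\log(t(n))]$ by combining the Cook--Levin theorem with the observation that the oblivious certificates $\mathbf{y^*}, \mathbf{z^*}$ depend only on the input length $n$, so they can be hard-wired into the nonuniform circuit for each slice. Concretely, suppose $L \in \Ott[t(n)]$ via a predicate $P(x,y,z)$ running in time $O(t(n))$, with $|y| = |z| = O(t(n))$. The key point is that for each fixed $n$ there exist strings $\mathbf{y^*}_n$ and $\mathbf{z^*}_n$ witnessing membership/non-membership for all $x \in \bit^n$. I will fix one such pair for each $n$ (existence is all we need for a nonuniform bound; constructivity is irrelevant here).

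The main steps are as follows. First, I would apply \cref{lem:tm_to_ckt} (Cook--Levin) to the Turing machine computing $P$ on inputs $(x,y,z)$ with $|x| = n$ and $|y| = |z| = O(t(n))$: since $P$ runs in time $O(t(n))$, we get a circuit $C_n(x,y,z)$ of size $O(t(n)\log t(n))$ computing $P$ on this input length. Second, I would substitute the constants: define $C_n'(x) := C_n(x, \mathbf{y^*}_n, \mathbf{z^*}_n)$. Hard-wiring the bits of $\mathbf{y^*}_n$ and $\mathbf{z^*}_n$ into $C_n$ only removes gates (or replaces input gates by constants), so $|C_n'| \le |C_n| = O(t(n)\log t(n))$. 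Third, I would verify correctness: if $x \in L \restrict{n}$, then by property 1 of \cref{def:Ot}, $P(x, \mathbf{y^*}_n, z) = 1$ for \emph{all} $z$, in particular for $z = \mathbf{z^*}_n$, so $C_n'(x) = 1$; if $x \notin L$, then by property 2, $P(x, y, \mathbf{z^*}_n) = 0$ for \emph{all} $y$, in particular $y = \mathbf{y^*}_n$, so $C_n'(x) = 0$. (Note that properties 1 and 2 cannot both fail and cannot be jointly inconsistent for the \emph{same} $x$, since a single $x$ is either in $L$ or not; the genuine subtlety that $\mathbf{y^*}_n$ and $\mathbf{z^*}_n$ must work simultaneously is exactly what obliviousness guarantees.) Hence the family $\{C_n'\}$ decides $L$ and has size $O(t(n)\log t(n))$, i.e. $L \in \SIZE[t(n)\log t(n)]$.

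I do not expect any serious obstacle; the lemma is essentially a bookkeeping exercise. The one point that deserves a sentence of care is \emph{why} this works for the oblivious class $\Ott$ but not for $\Stt$ in general: in $\Stt$ the certificates $y, z$ depend on $x$, so one cannot hard-wire a single pair into a circuit that must handle all $x \in \bit^n$; it is precisely the per-length (rather than per-input) quantifier order in \cref{def:Ot} that lets the substitution go through. A secondary minor point is to make sure the Cook--Levin bound is applied with the right input length parameter: the machine $P$ reads an input of total length $n + O(t(n)) = O(t(n))$ (using $t(n) \ge n$), so its running time $O(t(n))$ is indeed polynomially bounded in its input length and \cref{lem:tm_to_ckt} yields size $O(t(n)\log t(n))$ as claimed.
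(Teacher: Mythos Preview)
Your proposal is correct and follows essentially the same approach as the paper: apply Cook--Levin (\cref{lem:tm_to_ckt}) to the verifier to get a circuit of size $O(t(n)\log t(n))$, then hard-wire the oblivious certificates $y_n, z_n$ for each input length. Your write-up is in fact more careful than the paper's, which omits the explicit correctness verification and the remark distinguishing $\Ott$ from $\Stt$.
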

\begin{proof}
    Consider any language $L \in \Ott[t(n)]$, and let $V(\cdot, \cdot, \cdot)$ be its $t(n)$-time verifier. For any integer $n \in \N$, let $y_n, z_n \in \bit^{t(n)}$ be the irrefutable proofs for input size $n$. By \cref{lem:tm_to_ckt} we can convert $V(\cdot, \cdot, \cdot)$ into a circuit family $\{C_{n}\} \subseteq \cc{SIZE}[t(n)\log (t(n)]$. The values $y_n$ and $z_n$ can be hard-coded into $C_{n}$, and hence this circuit will decide $L$ on all inputs of size $n$.
\end{proof}

Having both an upper bound on the size of circuits simulating an $\Ott$ computation, and also a lower bound for $\Ott$ against circuits, we can use the circuit size hierarchy (\cref{thm:ckt_size_hierarchy}) to define a time hierarchy on $\Ott$.

\begin{theorem}\label{thm:O2_heirarchy}
        For $n \in \N$, let $t: \N \rightarrow \N$ be a time constructible function, s.t. $t(n) > n \geq 12$ then: $\Ott[t(n)] \subsetneq \Ott[t(n)^4\log^9(t(n))]$.
\end{theorem}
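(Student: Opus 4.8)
\textbf{Proof proposal for \cref{thm:O2_heirarchy}.}

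The plan is to chain together the lower bound of \cref{thm:O2_LB}, the upper bound of \cref{lem:O2_UB}, and the circuit size hierarchy theorem (\cref{thm:ckt_size_hierarchy}), choosing parameters so that the size window separating the two circuit classes lands strictly between the two time-bounded $\Ott$ classes. Concretely: from \cref{thm:O2_LB} instantiated at the larger time bound $T(n) := t(n)^4 \log^9(t(n))$, we obtain a language $L \in \Ott[T(n)]$ with $L \notin i.o.\text{-}\SIZE[T(n)^{1/4}/\log(T(n))]$. We then want to argue that $L \notin \Ott[t(n)]$; for this it suffices, by \cref{lem:O2_UB}, to show $\SIZE[t(n)\log(t(n))] \subseteq i.o.\text{-}\SIZE[T(n)^{1/4}/\log(T(n))]$, i.e. that the circuit lower bound we proved already beats every circuit family simulating an $\Ott[t(n)]$ machine. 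So the crux is the inequality
\[
  t(n)\log(t(n)) \;\le\; \frac{T(n)^{1/4}}{\log(T(n))} .
\]
Plugging in $T(n) = t(n)^4 \log^9(t(n))$ gives $T(n)^{1/4} = t(n)\log^{9/4}(t(n))$ and $\log(T(n)) = 4\log(t(n)) + 9\log\log(t(n)) \le 5\log(t(n))$ for large $n$, so the right-hand side is at least $t(n)\log^{9/4}(t(n)) / (5\log(t(n))) = t(n)\log^{5/4}(t(n))/5$, which dominates $t(n)\log(t(n))$ for all sufficiently large $n$. (The exponent $9$ in $\log^9$ is presumably chosen with exactly this slack in mind, together with the constant factors hidden in the $O(\cdot)$'s of \cref{lem:O2_UB} and \cref{thm:O2_LB}; I would track those constants honestly, but they do not change the shape of the argument.)

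The steps, in order: (1) set $T(n) = t(n)^4\log^9(t(n))$, check it is time-constructible (since $t$ is and composition/products of time-constructible functions are time-constructible) and satisfies $T(n) > n \ge 12$; (2) apply \cref{thm:O2_LB} at $T$ to get the explicit hard language $L \in \Ott[T(n)]$ outside $i.o.\text{-}\SIZE[T(n)^{1/4}/\log T(n)]$; (3) apply \cref{lem:O2_UB} to get $\Ott[t(n)] \subseteq \SIZE[t(n)\log t(n)]$; (4) verify the displayed size inequality above so that $\SIZE[t(n)\log t(n)] \subseteq i.o.\text{-}\SIZE[T(n)^{1/4}/\log T(n)]$ for all large $n$, hence $L \notin \Ott[t(n)]$; (5) conclude $\Ott[t(n)] \subsetneq \Ott[T(n)]$, using that $\Ott$ is monotone in its time bound (which is immediate from \cref{def:Ot}, since a $t(n)$-time predicate with $t(n)$-length certificates is also a $T(n)$-time predicate with certificates padded to length $T(n)$).

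I do not expect a genuine obstacle here — this is a parameter-pushing argument — but the one place to be careful is step (4), reconciling the asymptotic ($O(\cdot)$, ``sufficiently large $n$'') nature of both the $\SIZE$ upper bound and the $i.o.\text{-}\SIZE$ lower bound. Note that $\cref{thm:O2_LB}$ gives a lower bound against $i.o.\text{-}\SIZE$, which is the strong (almost-everywhere) form; this is exactly what is needed, because we must rule out $L$ being computed by small circuits even on finitely many lengths in order to contradict a $\SIZE$ (almost-everywhere) upper bound, and the $i.o.$ lower bound does precisely that. As long as the constants in \cref{lem:O2_UB} and \cref{thm:O2_LB} are absorbed correctly and one restricts to large enough $n$ (the regime in which ``$\subsetneq$'' for $\Ott$ is measured anyway, since $\SIZE$ and $\Ott$ membership are both defined ``for all sufficiently large $n$''), the separation goes through. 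A secondary routine check is that the larger time bound $T(n)$ does not violate the hypothesis $t(n) \le 2^n$-type constraints implicit in \cref{def:PTT} used inside \cref{thm:O2_LB}; for $T(n)$ polynomial (or even quasi-polynomial) in $n$ this is automatic, and for larger $t$ one notes $T(n)^{1/4}/\log T(n)$ still stays below $o(2^n/n)$ as required by the circuit constructions, which I would state explicitly.
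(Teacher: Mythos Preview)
Your proposal is correct and follows essentially the same approach as the paper: sandwich the two $\Ott$ classes via the circuit upper bound of \cref{lem:O2_UB} and the circuit lower bound of \cref{thm:O2_LB} applied at $T(n)=t(n)^4\log^9 t(n)$, then check that $t(n)\log t(n) \le T(n)^{1/4}/\log T(n)$ (the paper lands on the intermediate size $t(n)\log^{5/4} t(n)$ exactly as your computation does). The one cosmetic difference is that the paper also invokes the circuit size hierarchy \cref{thm:ckt_size_hierarchy} to insert a strict containment $\cc{SIZE}[t(n)\log t(n)]\subsetneq\cc{SIZE}[t(n)\log^{5/4} t(n)]$, but as your write-up implicitly recognizes, the strictness there is not actually needed---the trivial monotone containment of $\cc{SIZE}$ classes already suffices once the size inequality is verified.
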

\begin{proof}
    Combining \cref{thm:O2_LB}, \cref{lem:O2_UB}, and Circuit Size Hierarchy (\cref{thm:ckt_size_hierarchy}) we have: \\
     \[\Ott[t(n)] \subseteq \cc{SIZE}[t(n) \log t(n)] \subsetneq \cc{SIZE}[t(n) \log^{\frac{5}{4}} t(n)] \; ,\]
     and \\
      \[ \Ott[t(n)^4 \log^9(t(n))] \not\subseteq \cc{SIZE}[t(n) \log^{\frac{5}{4}} t(n)] \; .\]
\end{proof}

\begin{theorem}\label{thm:O2_heirarchy_fine_grained}
    For $n \in \N$, let $t: \N \rightarrow \N$ be a time constructible function, s.t. $t(n) \geq n$ then: for all $\eps > 0, \Ott[t(n)] \subsetneq \Ott[t(n)^{1 + \eps}]$.
\end{theorem}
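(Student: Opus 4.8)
The plan is to massage the cruder statement of \cref{thm:O2_heirarchy} into the sharper polynomial-gap form by a standard padding (translation) argument combined with the flexibility of choosing the base time-constructible function. The key observation is that \cref{thm:O2_heirarchy} already gives a separation $\Ott[t(n)] \subsetneq \Ott[t(n)^4 \log^9 t(n)]$ for \emph{every} time-constructible $t$ with $t(n) > n \ge 12$, and that $t(n)^4 \log^9 t(n) \le t(n)^{5}$ for all large $n$ whenever $t(n) \ge 2$. So for a fixed $\varepsilon > 0$, the goal is to produce, inside the gap between $\Ott[t(n)]$ and $\Ott[t(n)^{1+\varepsilon}]$, a chain of sub-gaps each of which is handled by \cref{thm:O2_heirarchy} applied to a suitable auxiliary function, or — more directly — to reduce to a single application by padding.

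First I would handle the easy regime: if $\varepsilon \ge 4$ (say), then $t(n)^{1+\varepsilon} \ge t(n)^5 \ge t(n)^4\log^9 t(n)$ for all sufficiently large $n$ (using $t(n) \ge n \ge$ some constant, and noting we may assume $t(n) \ge 12$ by replacing $t$ with $\max(t(n), n^2)$ if needed, which only makes the left class larger and the statement stronger), so \cref{thm:O2_heirarchy} immediately gives $\Ott[t(n)] \subsetneq \Ott[t(n)^{1+\varepsilon}]$. The substance is the small-$\varepsilon$ regime. Here I would use padding: given a language $L \in \Ott[t(n)^{1+\varepsilon}]$ separated by \cref{thm:O2_heirarchy} at an appropriately chosen larger base, pad inputs by a polynomial amount so that a $\Ott[T]$ computation on the padded instance corresponds to a $\Ott[T^{1/(1+\varepsilon)}]$-ish computation on the original. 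Concretely, pick an integer $j$ with $(1+\varepsilon)^j \ge 5$, set $t_0 = t$ and $t_{i+1}(n) = t_i(n)^{1+\varepsilon}$; then $t_j(n) \ge t(n)^5 \ge t(n)^4 \log^9 t(n)$, and \cref{thm:O2_heirarchy} applied with base $t$ yields $\Ott[t(n)] \subsetneq \Ott[t_j(n)]$. It remains to collapse the tower: if each consecutive inclusion $\Ott[t_i(n)] \subseteq \Ott[t_{i+1}(n)]$ were an equality, then $\Ott[t(n)] = \Ott[t_j(n)]$, contradicting the strict separation; hence at least one inclusion $\Ott[t_i(n)] \subsetneq \Ott[t_{i+1}(n)] = \Ott[t_i(n)^{1+\varepsilon}]$ is strict. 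Finally, a translation/padding argument upgrades this "some level" statement to "every level": if $\Ott[s(n)] = \Ott[s(n)^{1+\varepsilon}]$ for one time-constructible $s$, one pads to show $\Ott[t(n)] = \Ott[t(n)^{1+\varepsilon}]$ for a whole range of $t$, ultimately propagating the collapse all the way down (or up) to contradict the strict level we found. This is exactly the classical trick used to go from a "gap-$g$" hierarchy to a "gap-$(1+\varepsilon)$" hierarchy for syntactic classes, and it goes through here because $\Ott$ is closed under the relevant polynomial padding: an $\Ott$ verifier on padded inputs, with the irrefutable certificates depending only on the (padded) input length, translates faithfully.

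The main obstacle I anticipate is making the padding argument rigorous for the \emph{oblivious} semantic class rather than a syntactic one. For $\NTIME$ the padding translation is routine because membership in the class is syntactic; here one must check that the common YES/NO certificates survive the translation — i.e., that if $L$ has length-dependent irrefutable certificates $(\mathbf{y}^*_n, \mathbf{z}^*_n)$ for a $T$-time predicate, then the padded language $L' = \{x 1 0^{m(|x|)} : x \in L\}$ (for an appropriate time-constructible pad $m$) has length-dependent certificates for a predicate running in time roughly $T(|x|)$ on inputs of length $|x| + m(|x|)$. This is true because the padded input length determines $|x|$, hence determines $(\mathbf{y}^*_{|x|}, \mathbf{z}^*_{|x|})$, so the certificates remain oblivious; but one should also ensure that the padding function $m$ can be chosen time-constructible and that "stripping the pad" is computable within the time budget, and that the inverse translation (from a fast algorithm on $L'$ to one on $L$) likewise preserves obliviousness. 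None of this is deep, but it is the step where the semantic/oblivious nature of the class must be carefully respected rather than invoked as a black box; I would state it as a short self-contained padding lemma for $\Ott$ and then combine it with \cref{thm:O2_heirarchy} as sketched above.
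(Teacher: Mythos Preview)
Your approach is essentially the same as the paper's: iterate the exponent $1+\varepsilon$ until it exceeds the gap from \cref{thm:O2_heirarchy}, using a translation/padding argument to connect the levels. The paper phrases it slightly more directly---assume $\Ott[t(n)^{1+\varepsilon}]\subseteq\Ott[t(n)]$, then ``by translation'' chain this to $\Ott[t(n)^{(1+\varepsilon)^k}]\subseteq\Ott[t(n)]$ for $k=\lceil 3.1/\varepsilon\rceil$, contradicting \cref{thm:O2_heirarchy}---so your intermediate ``some level is strict, hence level~$0$ is strict'' detour is unnecessary (its contrapositive \emph{is} the paper's argument). Your explicit flagging of the oblivious padding lemma is well taken: the paper simply writes ``by translation'' without spelling out that the length-dependent certificates survive padding, which is exactly the point you note needs checking.
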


\begin{proof}
    Let us assume that $\Ott[t(n)^{1 + \eps}] \subseteq \Ott[t(n)]$, then by translation we have that:
    
    $$\Ott[t(n)^{{(1+\eps)}^2}] \subseteq \Ott[t(n)^{1 + \eps}] \subseteq \Ott[t(n)]$$ Inducting on any $k > 2$, we get that $\Ott[t(n)^{{(1 + \eps)}^k}] \subseteq \Ott[t(n)]$. Now setting $k = \ceil{3.1/ \eps}$, by Bernoulli's inequality we have that ${(1 + \eps)}^k \geq {(1 + \eps)}^{3.1/\eps} \geq 4.1$. Therefore, $\Ott[t(n)^{4.1}] \subseteq \Ott[t(n)]$, which contradicts \cref{thm:O2_heirarchy}.
\end{proof}

\begin{remark}
    We note here that while we are able to achieve a better gap in our hierarchy theorem in \cref{thm:O2_heirarchy_fine_grained} over \cref{thm:O2_heirarchy}, there is  a trade off. The hierarchy theorem defined in \cref{thm:O2_heirarchy} is explicit, that is we have an explicit language not known to be contained in the smaller class. However, when we apply our translation argument to get better parameters in \cref{thm:O2_heirarchy_fine_grained} we lose this explicitness.
\end{remark}

\section{Sparsity}

In this section, we use \emph{sparse extensions} to get various structural complexity results. We prove a more fine-grained statement of \cref{THM:main3} which states that any language in $\Stt[t(n)]$ with a \emph{uniformly-sparse extension} is actually in $\Ott[t(n)^2]$. This lets us extract as a corollary another proof of $\Se = \Oe$. As another application of \emph{sparse extensions}, we are able to recover the fixed polynomial lowerbounds for $\Ot$ from the previous section as stated in \cref{THM:main1}. Finally we show connections between \emph{sparse extensions} and open problems posed by \cite{GoldreichMeir2015}.

\begin{lemma}\label{lem:Ot_simul_St}
    Let $L \in \Stt[t(n)]$. If $L$ has a $t(n)$-uniformly-sparse extension then $L \in \Ott[t(n)^2]$.
\end{lemma}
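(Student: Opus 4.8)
The plan is to follow the proof-overview sketch given in the introduction, but with careful bookkeeping of the running time so that the final machine runs in $O(t(n)^2)$ rather than merely polynomial time. Let $L' \supseteq L$ be the $t(n)$-uniformly-sparse extension, so $L' \in \cc{TIME}[t(n)]$ and $|L' \cap \bit^n| = O(t(n))$ for every $n$. Let $V^*(x,y,z)$ be the $O(t(n))$-time $\cc{S_2}$-predicate for $L$ with certificate length $O(t(n))$. The key observation is that since there are only $O(t(n))$ strings of length $n$ in $L'$, and for each $x \in L'$ there is \emph{either} an irrefutable YES-certificate $y_x$ \emph{or} an irrefutable NO-certificate $z_x$ (depending on whether $x \in L$), we can bundle \textbf{all} of these certificates into a single common YES-certificate $\mathbf{y^*}$ and a single common NO-certificate $\mathbf{z^*}$ for length $n$: simply let $\mathbf{y^*}$ be the concatenation (over all $x \in L' \cap \bit^n$) of $y_x$ for those $x$ that are in $L$, and analogously for $\mathbf{z^*}$. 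Each of these has length $O(t(n)) \cdot O(t(n)) = O(t(n)^2)$, which is why the target time bound is $t(n)^2$.

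The oblivious verifier $P(x, \mathbf{y}, \mathbf{z})$ then works as follows. First run the $\cc{TIME}[t(n)]$ algorithm for $L'$ on $x$; if $x \notin L'$, immediately reject (this is correct since $L \subseteq L'$, and it costs $O(t(n))$). Otherwise, interpret $\mathbf{y}$ and $\mathbf{z}$ as sequences of candidate certificates; locate the block of $\mathbf{y}$ and the block of $\mathbf{z}$ that are purportedly associated with $x$ (indexing by, say, the rank of $x$ among the yes-instances of $L'$ of length $n$, or more simply by scanning through a list of all $O(t(n))$ strings and filtering by $L'$-membership). Then run $V^*(x, y_x, z_x)$ on the retrieved candidates and output the result. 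I would want to set up the encoding of $\mathbf{y^*}, \mathbf{z^*}$ so that the honest certificates have a self-delimiting, position-addressable format, so that the retrieval step is straightforward and runs in time $O(t(n)^2)$ (the dominant cost being to enumerate / re-derive the list of $L'$-yes-instances of length $n$, each test costing $O(t(n))$, of which there are $O(t(n))$ many; alternatively one fixes an arbitrary ordering and lets the certificate itself carry the list). The per-gate accounting should confirm the whole thing runs in $O(t(n)^2)$.

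The correctness argument splits into the two $\cc{O_2}$ conditions. If $x \in L$: set $\mathbf{y} = \mathbf{y^*}$ (the honest YES-certificate). Then $x \in L'$, so we reach the $V^*$ call; the block for $x$ contained in $\mathbf{y^*}$ is the genuine irrefutable certificate $y_x$, so $V^*(x, y_x, z) = 1$ for \emph{every} $z$, in particular for whatever (adversarial) block is extracted from an arbitrary $\mathbf{z}$. Hence $P(x, \mathbf{y^*}, \mathbf{z}) = 1$ for all $\mathbf{z}$. Symmetrically, if $x \notin L$: if moreover $x \notin L'$ then $P$ rejects outright regardless of certificates; if $x \in L' \setminus L$, set $\mathbf{z} = \mathbf{z^*}$, whose block for $x$ is the genuine irrefutable NO-certificate $z_x$, so $V^*(x, y, z_x) = 0$ for all $y$, hence $P(x, \mathbf{y}, \mathbf{z^*}) = 0$ for all $\mathbf{y}$. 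This establishes $L \in \Ott[t(n)^2]$.

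The main obstacle — really the only subtle point — is making the certificate-retrieval step genuinely oblivious and efficient: the honest prover must be able to lay out $\mathbf{y^*}$ (resp. $\mathbf{z^*}$) as a function of $n$ alone, and the verifier must be able to find, given $x$, the slot that the honest prover would have used, \emph{without} the adversarial prover being able to shift slots around to point $x$ at a wrong (but still honest-looking) certificate. This is handled by having the verifier itself recompute the canonical ordering of $L' \cap \bit^n$ (using the $\cc{TIME}[t(n)]$ algorithm for $L'$, which is deterministic and hence gives a fixed ordering) and index into $\mathbf{y}, \mathbf{z}$ by that canonically-determined position; since the position is computed by $P$ from $x$ and $n$ rather than read from the certificate, a cheating prover cannot misdirect it, and the irrefutability of the honest certificate in that slot does the rest. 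One should double-check that enumerating and $L'$-testing all of $\bit^n$ is avoidable — indeed it is, if the certificate is allowed to carry the (poly-size) list of yes-instances, which the verifier re-validates entry-by-entry with the $L'$ algorithm; this keeps the running time at $O(t(n)^2)$ rather than $O(2^n t(n))$.
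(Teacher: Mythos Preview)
Your proposal is correct and follows essentially the same approach as the paper: filter with the $\cc{TIME}[t(n)]$ algorithm for $L'$, then bundle the (at most $O(t(n))$ many) $\cc{S_2}$-certificates for the surviving inputs into a single oblivious YES-certificate and a single oblivious NO-certificate of length $O(t(n)^2)$.

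There is one minor implementation difference worth noting. You propose that the verifier extract \emph{both} an $x$-indexed block $y_x$ from $\mathbf{y}$ and an $x$-indexed block $z_x$ from $\mathbf{z}$, and run $V^*(x,y_x,z_x)$ once. The paper instead indexes only $Y^*$ by $x$ (as a table of pairs $(x_i,y_{x_i})$), lets $Z^*$ be an \emph{unindexed} concatenation of the NO-certificates, and has the verifier compute $\bigwedge_{z_i \in Z^*} V^*(x, y_x^*, z_i)$. Both are correct and meet the $O(t(n)^2)$ bound. The paper's version sidesteps entirely the ``slot-retrieval'' obstacle you spend your last paragraph on: since $Z^*$ is iterated over in full, there is no question of an adversarial $\mathbf{y}$ misdirecting the verifier to the wrong $z$-slot. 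In your version the same robustness is achieved provided each certificate is a self-contained list of labelled pairs $(x_i, w_i)$ that the verifier scans for a match on $x$; your final paragraph's detour through ranks and re-validated lists is unnecessary once you adopt that format (which you do hint at earlier).
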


\begin{proof}
For any $n$, let $L'$ be the \emph{$t(n)$-uniformly-sparse extension} of $L$, and let $\mathcal{F}$ be the $\cc{TIME}[t(n)]$ predicate that decides membership in $L'$. We will now design an $\Ott[t(n)^2]$ verifier $V$ for $L \restrict{n}$. Since both $L$ and $L'$ are $t(n)$-$\cc{SPARSE}$, we have that for most $x \in \bit^n$: $L \restrict{n}(x) = L' \restrict{n}(x) = 0$. $V$ will first use $\mathcal{F}$ to efficiently filter out most non-membership in $L' \restrict{n}$, and hence $L \restrict{n}$ in $\cc{TIME}[t(n)]$. Now $V$ only has to decide membership in $L$ over $t(n)$ many inputs $X = \{x \in \{0,1\}^n : \mathcal{F}(x) = 1 \}$. We will use the fact that since $L \in \Stt[t(n)]$, for all $x \in X$, if $x \in L$ there is an irrefutable YES certificate $y_x$ and if $x \notin L$ there is an irrefutable NO certificate $z_x$ and a verifier $V^*$, running in $\cc{TIME}[t(n)]$ s.t.
    \begin{itemize}
        \item if $x \in L$, $\exists y_x$, $\forall z$ s.t. $V^*(x,y_x,z) = 1$
        \item if $x \notin L$, $\exists z_x$, $\forall y$ s.t. $V^*(x,y,z_x) = 0$
    \end{itemize}

    \begin{figure}[!htbp]

    \begin{wbox}
    $V(x, Y^*, Z^*):$
    \begin{enumerate}[(1)]
    \item Set output $ = 1$.

    \item If $\mathcal{F}(x) = 0$, return $0$.

    \item Parse $Y^*$ to get $y_x^*$.

    \item For $z_i \in Z^*$, do:

        \begin{enumerate} [(a)]
            \item $\text{output} = \text{output} \wedge V^*(x, y_x^*, z_i)$.
        \end{enumerate}

    \item Return output.    
    
\end{enumerate}
    \end{wbox}
\caption{$\Ott[t(n)^2]$ Verifier for Language in $\Stt[t(n)]$ with $t(n)$-\emph{uniformly-sparse extension}}
\end{figure}

    Consider the string $Y^{*}$ which encodes a table of YES witnesses $y_x^*$ for every input $x \in X$. When $x \in L$ we set $y_x^* = y_x$, and when $x \notin L$ we will set $y_x^* = 0^{t(n)}$. The size of $Y^{*}$ is $O(t(n)^2)$, since there are at most $t(n)$ entries in the table each of length $t(n) + n$.
    For every $x \in X \cap \overline{L}$, let $z_x$ be the irrefutable NO-certificate corresponding to $x$ for $V^*$. We set $Z^*$ to be the concatenation of all such $z_x$. The size of $Z^*$ is also at most $t(n)^2$.
     
     We now show that $Y^{*}$ and $Z^{*}$ will serve as oblivious irrefutable \say{YES} and \say{NO} certificates respectively for $V$. On input $(x, Y^*, Z^*)$, $V$ first parses $Y^*$ to find the corresponding $y_x^*$ in time $\cc{TIME}[t(n)^2]$. Then for each $z_i \in Z^*$ we run $V^*(x, y_x^*, z_i)$. If for all $z_i$, $V^*(x, y_x^*, z_i) = 1$ then $V$ outputs $1$, otherwise $V$ will output $0$. Since we are making at most $t(n)$ calls that each cost $\cc{TIME}[t(n)]$, $V$ runs in $\cc{TIME}[t(n)^2]$.
     
    To see correctness, we first analyze the case when $x \in L$, then by construction $Y^{*}$ includes $y_x^* = y_x$ and $V$ will output $1$. On the other hand if $x \notin L$ then there is an irrefutable no-certificate $z_x$ in $Z^{*}$ so there is no $y_i$ such that $V(x, y_i, z_x) = 1$. Hence $V$ outputs $0$.
    
\end{proof}

By taking $t(n)$ to be a polynomial in \cref{lem:Ot_simul_St} we directly get \cref{cor:OP_SP} (also \cref{THM:main3}) relating $\Ot$ and $\St$. 

\begin{corollary}\label{cor:OP_SP}
    If $L \in \St$ and $L$ has an uniformly-sparse extension, then $L \in \Ot$
\end{corollary}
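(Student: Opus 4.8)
\textbf{Proof proposal for \cref{cor:OP_SP}.}

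The plan is to derive this directly as a special case of \cref{lem:Ot_simul_St}. Suppose $L \in \St$; then by \cref{def:St} there is some polynomial $p(n)$ with $L \in \Stt[p(n)]$. By hypothesis $L$ has a uniformly-sparse extension $L'$, meaning $L' \in \P \cap \SPARSE$, so $L'$ is decidable in time $q(n)$ for some polynomial $q$ and $|L' \cap \bit^n| \le r(n)$ for some polynomial $r$. The first step is to observe that a uniformly-sparse extension is, in particular, a $t(n)$-uniformly-sparse extension for a suitable polynomial $t$: set $t(n) := \max\{p(n), q(n), r(n), n\}$, which is time-constructible (being a max of polynomials) and satisfies $t(n) \ge n$. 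Then $L \in \Stt[p(n)] \subseteq \Stt[t(n)]$, and $L' \in \cc{TIME}[q(n)] \cap r(n)\text{-}\SPARSE \subseteq \cc{TIME}[t(n)] \cap t(n)\text{-}\SPARSE$, so $L'$ witnesses that $L$ has a $t(n)$-uniformly-sparse extension in the fine-grained sense.

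The second step is simply to invoke \cref{lem:Ot_simul_St} with this $t$: we conclude $L \in \Ott[t(n)^2]$. Since $t(n)^2$ is again a polynomial in $n$, \cref{def:Ot} gives $L \in \Ot$, as desired.

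There is no real obstacle here; the only thing to be careful about is the bookkeeping that a single polynomial $t$ can be chosen to simultaneously dominate the running time of the $\St$-verifier for $L$, the running time of the $\P$-algorithm for $L'$, the sparsity bound of $L'$, and $n$ itself — but since all of these are polynomials, their pointwise maximum is a time-constructible polynomial, so this is immediate. One should also note that \cref{lem:Ot_simul_St} requires only $t(n) \ge n$ (not the $t(n) > n \ge 12$ condition appearing in the lower-bound theorems), so there is no issue applying it for all input lengths.
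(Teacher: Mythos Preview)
Your proposal is correct and takes essentially the same approach as the paper: the paper's proof simply states that the corollary follows by taking $t(n)$ to be a polynomial in \cref{lem:Ot_simul_St}, and you have spelled out the (routine) bookkeeping needed to pick a single polynomial $t$ dominating the verifier time, the decision time for $L'$, and the sparsity bound.
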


Similarly, one can prove \cref{THM:main4} by showing the same consequence for $\coNP$ vs $\coONP$, thus making a partial progress towards the open questions posed by Goldreich and Meir in \cite{GoldreichMeir2015}. In the exponential regime, since all languages have the trivial $2^n$-\emph{uniformly-sparse extension} we get the equivalence between $\Oe$ and $\Se$ as seen in \cref{cor:OE_SE}.

\begin{corollary}\label{cor:OE_SE}
    $\Se = \Oe$
\end{corollary}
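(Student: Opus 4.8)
The statement to prove is $\Se = \Oe$, i.e. Corollary~\ref{cor:OE_SE}.

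The plan is to derive this as an immediate consequence of the parameterized containment in \cref{lem:Ot_simul_St} together with the trivial observation (stated right after the fine-grained sparsity definitions) that \emph{every} binary language $L$ admits the trivial $2^n$-uniformly-sparse extension, namely $\bit^*$ itself, which is decidable in time $O(2^n)$ and is vacuously $2^n$-$\cc{SPARSE}$. So there is nothing to "find": the uniformly-sparse extension hypothesis is automatically satisfied in the exponential regime.

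Concretely, I would argue both inclusions. The inclusion $\Oe \subseteq \Se$ is immediate from the definitions (every oblivious $\cc{S_2}$-verifier is in particular an $\cc{S_2}$-verifier, since certificates depending only on $|x|$ are a special case of certificates depending on $x$); this is already noted in the introduction ($\Ot \subseteq \St$, and the same argument works slicewise for any time bound). For the reverse inclusion, take any $L \in \Oe$ wait --- take any $L \in \Se$. By definition there is a time bound $t(n) \le 2^{O(n)}$ with $L \in \Stt[t(n)]$. Pick a constant $c$ so that $t(n) \le 2^{cn}$ and set $T(n) = 2^{cn}$; since $T$ is time-constructible and $T(n) \ge t(n)$, we have $L \in \Stt[T(n)]$ as well (more time only helps). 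Now $\bit^*$ is a $T(n)$-uniformly-sparse extension of $L$: it contains $L$, it is decidable in time $O(T(n))$ (indeed in constant time), and $|\bit^* \cap \bit^n| = 2^n = O(T(n))$. Applying \cref{lem:Ot_simul_St} gives $L \in \Ott[T(n)^2] = \Ott[2^{2cn}] \subseteq \Oe$, since $2^{2cn} \le 2^{O(n)}$. Hence $\Se \subseteq \Oe$, and combined with the trivial direction we get $\Se = \Oe$.

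I do not expect any genuine obstacle here: the corollary is purely a specialization of the already-established \cref{lem:Ot_simul_St}, and the only "work" is checking that $\bit^*$ witnesses the hypothesis and that squaring an exponential time bound stays within $2^{O(n)}$, both of which are routine. The one place to be slightly careful is that the definition of $\Se$ quantifies over some $t(n) \le 2^{O(n)}$ rather than fixing $t(n) = 2^n$, so one must absorb the hidden constant into the exponent before invoking the lemma and then observe that the resulting $2^{O(n)}$ bound after squaring is still of the required form; this is exactly the bookkeeping done above.
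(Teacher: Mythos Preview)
Your proposal is correct and follows exactly the paper's approach: both invoke \cref{lem:Ot_simul_St} with the trivial $2^{O(n)}$-uniformly-sparse extension $\bit^*$, observe that squaring an exponential bound stays in $2^{O(n)}$, and pair this with the immediate inclusion $\Oe \subseteq \Se$. If anything, you are slightly more careful than the paper's own proof in explicitly handling the hidden constant in the exponent of the $\Se$ time bound.
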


\begin{proof}
    As noted in \cref{sec:prelim_sparsity}, every language is $2^{n}$-$\cc{SPARSE}$, and has the trivial $2^n$-\emph{uniformly-sparse extension}: $\bit^*$. When $t(n) = 2^n$, by \cref{lem:Ot_simul_St} we get that $\Stt[2^n] \subseteq \Ott[2^{2n}]$.
\end{proof}

In particular, the following lemma shows that the hard language in $\Stt[t(n)]$ defined in \cref{thm:S2_LB} admits a $t(n)$-\emph{uniformly-sparse extension}, giving another proof of \cref{cor:fixed_poly_direct}.

\begin{lemma}\label{lem:S2_LB_sparse_extension}
For $n \in \N$, let $t:\N \rightarrow \N$ be a time-constructible function, s.t. $t(n) > n \geq 12$ then, there is an explicit language $L_t \in \Stt[t(n)]$ s.t. $L_t \notin \SIZE\left[\frac{t(n)^{1/4}}{\log(t(n))}\right]$. Moreover, $L_t$ has a $t(n)$-uniformly-sparse extension $L'_t$.
\end{lemma}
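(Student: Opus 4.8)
The first part of the statement — existence of an explicit $L_t \in \Stt[t(n)]$ with $L_t \notin \SIZE[t(n)^{1/4}/\log(t(n))]$ — is exactly \cref{thm:S2_LB}, so I would simply invoke that theorem and reuse the very same language $L_t$: recall that $\mathcal{X}_{L_t\restrict{n}} = f_n \,\|\, 0^{2^n - |E_{n,s}| - 1}$, where $s = \lfloor t(n)^{1/4}/\log(t(n))\rfloor$, $|E_{n,s}| = \lceil 5s\log s\rceil$, and $f_n$ is the canonical $\cc{Avoid}(\cc{PTT}_{n,s})$ solution. The only real content left is to exhibit the $t(n)$-uniformly-sparse extension $L'_t$, i.e. a language $L'_t \supseteq L_t$ with $L'_t \in \cc{TIME}[t(n)] \cap t(n)\text{-}\cc{SPARSE}$.

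The key structural observation is that $L_t\restrict{n}$ is supported entirely on the lexicographically-first $|E_{n,s}|+1$ strings of length $n$: by construction every $x \in L_t$ of length $n$, viewed as an integer in $\{0,\dots,2^n-1\}$, satisfies $x \le |E_{n,s}|$. So the natural candidate is
\[
L'_t := \condset{x \in \bit^*}{x \text{, as an integer, satisfies } 0 \le x \le |E_{|x|,s}|}\; ,
\]
where $s = s(|x|) = \lfloor t(|x|)^{1/4}/\log(t(|x|))\rfloor$ as above. Then $L_t \subseteq L'_t$ is immediate. For sparsity, $|L'_t \cap \bit^n| \le |E_{n,s}| + 1 = O(s\log s) = O(t(n)^{1/4})$, which is certainly $O(t(n))$. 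For the time bound, on input $x$ of length $n$ one computes $t(n)$ (time constructibility), then $s$ and $|E_{n,s}|$ by elementary arithmetic on numbers of magnitude $\mathrm{poly}(t(n))$, and finally compares the integer value of $x$ against $|E_{n,s}|$; all of this runs comfortably within $O(t(n))$ steps (in fact far less), so $L'_t \in \cc{TIME}[t(n)]$. Hence $L'_t$ is $t(n)$-uniformly-sparse and extends $L_t$, as required.

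The only point needing mild care — and the place I would expect a referee to push — is bookkeeping: verifying that $t(n)$-constructibility really does let us compute $s$ and $|E_{n,s}|$ and perform the comparison within the claimed $O(t(n))$ budget (the integer $x$ has $n \le t(n)$ bits, and $|E_{n,s}|$ has $O(\log t(n))$ bits, so the comparison is cheap), and double-checking the inequality $|E_{n,s}|+1 = \lceil 5 s \log s\rceil + 1 < 2^n$ so that $L_t\restrict{n}$ is genuinely a prefix pattern and the containment $L_t \subseteq L'_t$ holds at every length $n \ge 12$; but this is exactly the hypothesis already used in \cref{def:PTT} and \cref{thm:S2_LB}. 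With that in hand, combining the extension $L'_t$ with \cref{lem:Ot_simul_St} reproves $L_t \in \Ott[t(n)^2]$, and taking $t(n) = n^{5k}$ recovers \cref{cor:fixed_poly_direct} without going through the "oblivious witness" observation of \cref{thm:O2_LB}.
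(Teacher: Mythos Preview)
Your proposal is correct and matches the paper's own proof essentially line for line: the paper also takes $L'_t$ to be the language whose characteristic string on length $n$ is $1^{|E_{n,s}|+1}\,\|\,0^{2^n-|E_{n,s}|-1}$, i.e.\ exactly the ``lexicographic prefix'' extension you wrote down, and verifies the same containment, sparsity, and time bound. Your bookkeeping on the time to compute $|E_{n,s}|$ is in fact more careful than the paper's (which simply asserts the comparison is in $\cc{TIME}[n]$).
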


\begin{proof}
    Let $L_t$ be the $\Stt[t(n)]$ language defined in the proof \cref{thm:S2_LB} with the characteristic string $\mathcal{X}_{L_t\restrict{n}} := f_n || 0^{2^n -   |E_{n,s}| - 1}$. We now define the language $L_t'$ whose characteristic string $\mathcal{X}_{L_t'\restrict{n}} := 1^{|E_{n,s}|+ 1} || 0^{2^n - |E_{n,s}| - 1}$. To see that this $L_t'$ is a $t(n)$-$\emph{uniformly-sparse extension}$ of $L_t$, clearly $L_t \subseteq L_t'$. Moreover membership of $x \in L_t'$ can be decided by checking if the binary value of $x$ is less than or equal to  $|E_{n,s}|+1$ which can be done in $\cc{TIME}[n] \subseteq \cc{TIME}[t(n)]$.
\end{proof}

Equipped with this lemma we have an alternative proof of fixed polynomial lower bounds for $\Ot$ as stated in \cref{THM:main1}.

\begin{corollary} \textnormal{(\cref{THM:main1})}
    For every $k \in \N$, $\Ot \not\subseteq \cc{SIZE}[n^k]$. Moreover, for every $k$ there is an explicit language $L_k$ in $\Ot$ s.t. $L_k \notin \cc{SIZE}[n^k]$.
\end{corollary}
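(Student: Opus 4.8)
The goal is to re-derive \cref{THM:main1} (fixed-polynomial lower bounds for $\Ot$) from the sparse-extension machinery, namely \cref{lem:Ot_simul_St}, \cref{cor:OP_SP}, and \cref{lem:S2_LB_sparse_extension}. The plan is to fix $k \in \N$, pick a suitable polynomial time bound $t(n)$, invoke \cref{lem:S2_LB_sparse_extension} to obtain an explicit language $L_t \in \Stt[t(n)]$ that escapes $\SIZE[t(n)^{1/4}/\log t(n)]$ and which carries a $t(n)$-uniformly-sparse extension $L'_t$, and then push $L_t$ into $\Ot$ via \cref{lem:Ot_simul_St} (equivalently \cref{cor:OP_SP}).

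First I would choose $t(n) = n^{5k}$, a polynomial and hence time-constructible with $t(n) > n \ge 12$ for all large $n$; this is the same choice used in \cref{cor:fixed_poly_direct}. Applying \cref{lem:S2_LB_sparse_extension} with this $t$ yields an explicit $L_t \in \Stt[n^{5k}]$ with $L_t \notin \SIZE[t(n)^{1/4}/\log t(n)]$; since $t(n)^{1/4}/\log t(n) = n^{5k/4}/(5k\log n) \ge n^k$ for all sufficiently large $n$, we get $L_t \notin \SIZE[n^k]$. The same lemma guarantees $L_t$ has a $t(n)$-uniformly-sparse extension $L'_t$, which in particular means $L_t$ has a uniformly-sparse extension in the sense of \cref{def:uniformly-sparse extension} (a polynomial sparsity bound and a polynomial-time decision procedure). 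Since $L_t \in \Stt[n^{5k}] \subseteq \St$, \cref{cor:OP_SP} then gives $L_t \in \Ot$; alternatively, \cref{lem:Ot_simul_St} directly yields $L_t \in \Ott[t(n)^2] = \Ott[n^{10k}] \subseteq \Ot$. Setting $L_k := L_t$ finishes the argument, and explicitness is inherited from the explicitness of $L_t$ in \cref{lem:S2_LB_sparse_extension}.

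The only point that requires care — and it is routine rather than deep — is the arithmetic matching the circuit-size lower bound $t(n)^{1/4}/\log t(n)$ against the target $n^k$: one must verify that with $t(n) = n^{5k}$ the exponent $5k/4$ strictly dominates $k$ even after dividing by the logarithmic factor, which it does for all large $n$, and that $i.o.\text{-}\SIZE$ lower bounds imply the plain $\SIZE$ lower bound we want. There is no real obstacle here: the whole corollary is a packaging of \cref{lem:S2_LB_sparse_extension} with \cref{cor:OP_SP}, so the proof is essentially a one-line invocation plus the parameter bookkeeping already performed in \cref{cor:fixed_poly_direct}.
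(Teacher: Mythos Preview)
Your proposal is correct and follows exactly the same approach as the paper's proof: fix $t(n) = n^{5k}$, invoke \cref{lem:S2_LB_sparse_extension} to get an explicit $L_t \in \Stt[n^{5k}] \setminus \cc{SIZE}[n^k]$ with a uniformly-sparse extension, and then apply \cref{lem:Ot_simul_St} to conclude $L_t \in \Ott[n^{10k}] \subseteq \Ot$. Your write-up simply spells out the arithmetic check $n^{5k/4}/(5k\log n) \ge n^k$ that the paper leaves implicit.
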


\begin{proof}
    Fix $t(n) = n^{5k}$. Then by \cref{lem:S2_LB_sparse_extension} there is an explicit language $L_k$ such that $L_k \not\subseteq \cc{SIZE}[n^{k}]$, and $L_k$ has an \emph{uniformly-sparse extension}. Applying \cref{lem:Ot_simul_St} we have that $L_k \in   \Ott[n^{10k}] \subseteq \Ot$.      
\end{proof}

\section{Extension}

In this section we show how to extend our result to even, potentially, smaller class $
\Ot \cap \Lt$, where $\Lt$ is a complexity classes defined by Korten and Pitassi in \cite{KP24}.

\begin{definition}[\cite{KP24}]
A language $L$ belongs to $\Lt$, if there is a polynomial-time predicate
$R(x,y,z)$ and a polynomial $p$, so that for all $x: R(x, \cdot , \cdot)$ defines a total order on
$\set{0, 1}^ {p(n)}$ whose minimal element $m$ satisfies $m_1 = L(x)$.
That is, for every $x$, the first bit of the minimal elements denotes whether $x \in L$. 
\end{definition}

It follows directly from the definition that $\Lt \subseteq \St$. Moreover, it has been also shown that $\Lt$ can equivalently be characterized as the class of all languages that are many-to-one reducible, to the so-called \emph{Linear Ordering Principle} ($\cc{LOP}$), in polynomial time. At the same time, $\cc{Avoid}$ also reduces to $\cc{LOP}$. Consequently, one can formulate the following analogue of \cref{Alg:FS2P}.

\begin{theorem}[\cite{KP24}]
\label{Alg:L2P}
$\cc{Avoid}$ is many-to-reducible to $\Lt$, in polynomial time.    
\end{theorem}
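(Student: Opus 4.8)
The plan is to exhibit, given an input-expanding circuit $C:\bit^n\to\bit^{n+1}$, a polynomial-time predicate $R_C(w,u,v)$ together with a polynomial $p$, so that $R_C(w,\cdot,\cdot)$ always defines a total order on $\bit^{p}$ whose minimal element encodes (in a fixed prefix) a string $y\in\bit^{n+1}$ outside $\cc{Image}(C)$. The natural route is to reuse the \emph{Linear Ordering Principle} characterization of $\Lt$ mentioned in the excerpt: it suffices to give a polynomial-time many-one reduction from $\cc{Avoid}$ to $\cc{LOP}$, i.e.\ to produce from $C$ a succinctly-presented total order on an exponential-size domain from whose minimum the avoiding string can be read off in polynomial time.

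First I would recall the classical ``iterative halving'' / Korten-style argument that underlies the $\St$ algorithm for $\cc{Avoid}$ (\cref{Alg:FS2P}): since $|\cc{Image}(C)|\le 2^n < 2^{n+1}$, at least half of the $(n{+}1)$-bit strings are avoided, and more generally one can maintain a shrinking interval of candidate avoiding strings, at each step using $C$ to certify that one of the two halves still contains an avoided point. The key is that this whole process is \emph{oblivious to nondeterministic choices} once we fix a canonical tie-breaking rule, so it can be phrased as selecting the lexicographically-least avoided string (or the least string in some efficiently-checkable linear pre-order). Concretely, I would let the domain of the order be $\bit^{n+1}$ itself (padded to $p(n)$ bits if a uniform polynomial length is desired), and define $y \prec y'$ according to: ``$y$ is avoided and $y'$ is not,'' or ``both have the same avoidance status and $y <_{\mathrm{lex}} y'$.'' The subtlety is that ``$y$ is avoided'' is a $\coNP$ predicate, not polynomial-time, so this naive order is not polynomial-time checkable — which is exactly where the real work lies.

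The main obstacle, therefore, is making the comparison predicate genuinely polynomial-time while still forcing the minimum to encode an avoided point. The fix is to let the domain be pairs $(y,\text{cert})$ where $\text{cert}$ is a purported ``below-threshold'' certificate in the iterative-halving tree — essentially the transcript of the halving process — so that validity of the transcript is polynomial-time checkable and an invalid transcript is ordered above every valid one; among valid transcripts we order by the induced $y$ lexicographically. One must then verify (i) $R_C$ is a total order for every $C$ — totality and antisymmetry are immediate from lexicographic tie-breaking, transitivity needs a short case analysis on validity status; (ii) a valid transcript always exists, which is exactly the union-bound/counting fact that drives \cref{Alg:FS2P}; and (iii) the minimal valid transcript yields $y\notin\cc{Image}(C)$, which is the correctness of the halving argument. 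Since the excerpt already grants us \cref{Alg:FS2P} and the $\cc{LOP}$-characterization of $\Lt$, the cleanest write-up is to cite the reduction $\cc{Avoid}\le_m\cc{LOP}$ from \cite{KP24} directly and simply remark that composing it with $\cc{LOP}\in\Lt$ (again \cite{KP24}) gives the claim; the paragraph above is the sketch one would give if asked to reconstruct that reduction from scratch.
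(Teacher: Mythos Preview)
The paper does not prove this theorem at all: it is stated with attribution \cite{KP24} and used as a black box, exactly as you yourself recommend in your last sentence. So on the level of ``what the paper does,'' your proposal and the paper coincide --- both defer entirely to \cite{KP24}.

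Regarding your attempted reconstruction: the sketch is in the right spirit but is not yet a proof. The crux you correctly isolate is that ``$y$ is avoided'' is a $\coNP$ predicate, so the naive order is not polynomial-time. Your proposed fix --- enlarge the domain to pairs $(y,\text{cert})$ where $\text{cert}$ is a transcript of the iterative-halving process, order invalid transcripts above valid ones, and break ties lexicographically --- has a real gap: in the Korten/CHR halving argument, certifying that a given half of the current interval contains an avoided point is itself a $\coNP$ statement (it says ``fewer than $2^k$ preimages land here''), so a ``transcript'' whose local steps are polynomial-time checkable does not obviously exist. You have pushed the $\coNP$ difficulty from the comparison into the validity check without explaining how it disappears. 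The actual reduction in \cite{KP24} uses a different device to make the comparison circuit genuinely polynomial-time; reconstructing it requires more than the halving picture you outline. Since the present paper only needs the statement, citing \cite{KP24} is both what the paper does and the right call.
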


Given that, one can repeat the argument of \cref{thm:S2_LB}, invoking \cref{Alg:L2P} instead of \cref{Alg:FS2P}, to obtain the following analogue of \cref{cor:fixed_poly_direct}.

\begin{corollary}
\label{cor:fixed_poly_direct_L2p}
 For all $k \in \N$, there exists an explicit language $A_k \in \Lt$ s.t. $A_k \not\subseteq \cc{SIZE}[n^k]$. Moreover, $A_k$ has an uniformly-sparse extension.
\end{corollary}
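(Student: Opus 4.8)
The plan is to mimic the proof of \cref{cor:fixed_poly_direct}, but route the construction of the hard truth table through the reduction of \cref{Alg:L2P} rather than the single-valued $\FSt$ algorithm of \cref{Alg:FS2P}. First I would fix $k \in \N$, set $t(n) = n^{5k}$, and $s = \floor{t(n)^{1/4}/\log t(n)} = \floor{n^{5k/4}/\log(n^{5k})}$, so that $|E_{n,s}| = \ceil{5 s \log s}$ is polynomial in $n$ and strictly less than $2^n$ for large $n$. As in \cref{thm:S2_LB}, form the prefix truth table generator circuit $\cc{PTT}_{n,s} : \bit^{|E_{n,s}|} \to \bit^{|E_{n,s}|+1}$, which by \cref{lem:ptt_size} has size and uniform-construction time $O(|E_{n,s}|^3) = \poly(n)$.

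Next I would apply \cref{Alg:L2P}: since $\cc{Avoid}$ is many-one reducible to $\cc{LOP}$ (equivalently, solvable in $\Lt$) in polynomial time, there is a canonical, well-defined value $g_n \in \bit^{|E_{n,s}|+1}$ which is the $\Lt$-computed avoidance solution for the instance $\cc{PTT}_{n,s}$; here "canonical" uses that the minimal element of the total order is unique, so the bit-extraction language is genuinely a language (not merely a search problem). Define $A_k$ by $\mathcal{X}_{A_k\restrict{n}} := g_n \,||\, 0^{2^n - |E_{n,s}| - 1}$. Because $\Lt$ is closed under the polynomial-time post-processing needed to extract a designated bit of the minimal element, and because the instance $\cc{PTT}_{n,s}$ itself is uniformly constructible in $\poly(n)$ time, membership $x \in A_k$ is decided by an $\Lt$-predicate whose running time is polynomial in $|x| = n$; hence $A_k \in \Lt$. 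That $g_n \notin \cc{Image}(\cc{PTT}_{n,s})$ gives $A_k\restrict{n} \notin \SIZE[s]$ on every length $n$, and with $t(n) = n^{5k}$ one checks $s \geq n^k$ for all large $n$, so $A_k \not\subseteq \SIZE[n^k]$.

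For the uniformly-sparse extension, I would reuse the construction from \cref{lem:S2_LB_sparse_extension} verbatim: let $A_k'$ be the language with $\mathcal{X}_{A_k'\restrict{n}} := 1^{|E_{n,s}|+1} \,||\, 0^{2^n - |E_{n,s}| - 1}$. Then $A_k \subseteq A_k'$ since $g_n$ is a string of length $|E_{n,s}|+1$; $A_k'$ is $\poly(n)$-sparse since each slice has exactly $|E_{n,s}|+1 = \poly(n)$ yes-instances; and $A_k' \in \P$ because on input $x$ of length $n$ one computes $|E_{n,s}|$ (a time-constructible quantity depending only on $n$ and $k$) and checks whether the integer value of $x$ is at most $|E_{n,s}|+1$, all in $\cc{TIME}[\poly(n)]$. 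Hence $A_k'$ is uniformly-sparse and is an uniformly-sparse extension of $A_k$.

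The main obstacle — and the one place deserving a careful line rather than a wave of the hand — is verifying that the bit-extraction language really lies in $\Lt$ with only a polynomial blow-up in the verifier time: one must argue that $\Lt$ is closed under the composition "uniformly build $\cc{PTT}_{n,s}$, then apply the many-one reduction of \cref{Alg:L2P}, then read off a fixed coordinate of the minimal element," i.e. that the total order $R$ witnessing the $\cc{LOP}$-instance can be defined by a $\poly(n)$-time predicate in the original input length and that selecting the relevant output bit is a polynomial-time post-processing that preserves the $\Lt$ form. Everything else (the size bound $s \geq n^k$, input-expansion $|E_{n,s}| < 2^n$, and the sparse-extension checks) is routine and identical in spirit to \cref{thm:S2_LB} and \cref{lem:S2_LB_sparse_extension}.
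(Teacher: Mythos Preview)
Your proposal is correct and follows essentially the same approach as the paper, which simply says to ``repeat the argument of \cref{thm:S2_LB}, invoking \cref{Alg:L2P} instead of \cref{Alg:FS2P}'' without giving further details. You have correctly filled in those details, including the uniformly-sparse extension via \cref{lem:S2_LB_sparse_extension}, and you rightly flag that the one point deserving care is the closure of $\Lt$ under the polynomial-time pre- and post-processing needed to extract a designated bit of the canonical avoidance solution.
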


Finally, as $\Lt \subseteq \St$ the explicit ``hard'' languages actually collapse all the way to $\Lt \cap \Ot$.

\begin{corollary}
\label{cor:fixed_poly_direct_L2p_O2p}
 For all $k \in \N$, there exists an explicit language $A_k \in \Lt \cap \Ot$ s.t. $A_k \not\subseteq \cc{SIZE}[n^k]$. 
\end{corollary}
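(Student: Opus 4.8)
The plan is to reuse, verbatim, the explicit language $A_k$ delivered by \cref{cor:fixed_poly_direct_L2p} and to show that it additionally lies in $\Ot$, with no modification to the language itself. Recall that \cref{cor:fixed_poly_direct_L2p} gives, for each $k \in \N$, an explicit $A_k$ with $A_k \in \Lt$, $A_k \notin \cc{SIZE}[n^k]$, and $A_k$ possessing a uniformly-sparse extension; concretely, $A_k$ is the language $L_t$ of \cref{thm:S2_LB} for $t(n) = n^{5k}$, whose canonical $\cc{Avoid}(\cc{PTT}_{n,s})$ solution is obtained through the $\cc{LOP}$-reduction of \cref{Alg:L2P} rather than through \cref{Alg:FS2P}. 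Its characteristic string $f_n \,\|\, 0^{2^n - |E_{n,s}| - 1}$ is unchanged, and hence so is the explicit uniformly-sparse extension $L'_t$ of \cref{lem:S2_LB_sparse_extension} (membership there is just the test that the integer value of the input is at most $|E_{n,s}| + 1$).

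Since $\Lt \subseteq \St$, the language $A_k$ is in particular a language in $\St$ that has a uniformly-sparse extension. I would then invoke \cref{cor:OP_SP} (equivalently \cref{lem:Ot_simul_St} instantiated with a polynomial bound): every language in $\St$ with a uniformly-sparse extension lies in $\Ot$. The construction underlying \cref{lem:Ot_simul_St} produces an $\Ott$ verifier for the \emph{same} language $L$ --- it runs the polynomial-time algorithm for the extension to discard all but polynomially many inputs per length, and then uses a hard-coded table of YES-certificates together with a concatenation of NO-certificates as the oblivious certificates. Applying this with $L = A_k$ yields $A_k \in \Ot$, and combining this with the facts $A_k \in \Lt$ and $A_k \notin \cc{SIZE}[n^k]$ already in hand gives $A_k \in \Lt \cap \Ot$ with $A_k \notin \cc{SIZE}[n^k]$, as claimed.

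I do not expect a genuine obstacle here: both ``$\in \Lt$'' and ``$\in \Ot$'' are properties of the language rather than of a fixed machine, the witness transformation of \cref{lem:Ot_simul_St} leaves the $\Lt$-certificate untouched, and the $\Lt$-membership imposes no constraint on the $\Ot$-verifier, so the two membership arguments do not interfere. The only point deserving a line of verification is the bookkeeping that the $A_k$ of \cref{cor:fixed_poly_direct_L2p} is literally this $L_t$, so that its uniformly-sparse extension is available exactly as required by \cref{cor:OP_SP}; everything else is a direct composition of already-established results.
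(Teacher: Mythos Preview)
Your proposal is correct and follows exactly the paper's own argument: use $\Lt \subseteq \St$ together with the uniformly-sparse extension of $A_k$ to invoke \cref{cor:OP_SP} and conclude $A_k \in \Ot$. The paper's proof is the one-line version of what you wrote.
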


\begin{proof}
As $\Lt \subseteq \St$ and  for all $k \in \N: A_k$ has an uniformly-sparse extension,
by  \cref{cor:OP_SP} $A_k \in \Ot$. 
\end{proof}
\section{Open Problems}

We conclude with a few interesting open problems:

\begin{itemize}
    \item Can we show that every sparse $\St$ language is also in $\Ot$? 
    \item Can we tighten the gap in the $\Ott$ hierarchy theorem (\cref{THM:main2})? 
    \item Can we show a non-trivial upper bound for $\Ot$, for example $\P^{\NP}, \MA, \PP$? This would imply explicit fixed-polynomial lower bounds for such classes. On the other hand, we do note that under reasonable derandomization assumptions, $\Ot \subseteq \St = \P^{\NP}$.
    \item Can we arrive at something interesting about time hierarchy theorem for semantic classes where fixed-polynomial lower bounds are known e.g. $\St, \; \ZPP^{\NP}$, assuming $\NP \not\subseteq \PPoly$? For instance, if $\NP \subseteq \PPoly$, then it follows that $\St\subseteq \PPoly$. One could then invoke the circuit size hierarchy theorem (\cref{thm:ckt_size_hierarchy}) to establish a hierarchy theorem for $\Stt$, similar to how we obtain the hierarchy theorem for $\Ott$.

\end{itemize}

\section{Acknowledgements}

The authors would like to thank Alexander Golovnev, William Gasarch and
Edward Hirsch for many helpful discussions and feedback on an earlier draft of this manuscript. 
The authors would also like to thank the anonymous referees for useful comments.

\bibliographystyle{alpha}
\bibliography{refs,bibliography}

\newcommand{\etalchar}[1]{$^{#1}$}
\begin{thebibliography}{DPWV22}

\bibitem[Aar06]{Aaronson06}
S.~Aaronson.
\newblock Oracles are subtle but not malicious.
\newblock In {\em 21st Annual {IEEE} Conference on Computational Complexity
  {(CCC} 2006), 16-20 July 2006, Prague, Czech Republic}, pages 340--354.
  {IEEE} Computer Society, 2006.

\bibitem[Aar07]{Aaronson07}
S.~Aaronson.
\newblock The learnability of quantum states.
\newblock In {\em Proceedings of the Royal Society A}, volume 463, page
  3089–3114, 2007.

\bibitem[AB09]{AroraBarak09}
S.~Arora and B.~Barak.
\newblock {\em Computational complexity: a modern approach}.
\newblock Cambridge University Press, 2009.

\bibitem[AGL23]{AGL23}
D.~Apon, W.~Gasarch, and K.~Lawler.
\newblock The complexity of grid coloring.
\newblock {\em Theory Comput. Syst.}, 67(3):521--547, 2023.

\bibitem[Bar02]{Barak02}
B.~Barak.
\newblock A probabilistic-time hierarchy theorem for ``slightly non-uniform"
  algorithms.
\newblock In {\em RANDOM}, pages 194--208, 2002.

\bibitem[BCG{\etalchar{+}}96]{BCGKT96}
N.~H. Bshouty, R.~Cleve, R.~Gavald{\`{a}}, S.~Kannan, and C.~Tamon.
\newblock Oracles and queries that are sufficient for exact learning.
\newblock {\em J. Comput. Syst. Sci.}, 52(3):421--433, 1996.

\bibitem[Cai07]{Cai2007}
J.-Y. Cai.
\newblock ${S_2P} \subseteq {ZPP^{NP}}$.
\newblock {\em Journal of Computer and System Sciences}, 73(1):25--35, 2007.

\bibitem[Can96]{Canetti96}
R.~Canetti.
\newblock More on {BPP} and the polynomial-time hierarchy.
\newblock {\em Inf. Process. Lett.}, 57(5):237--241, 1996.

\bibitem[CGL{\etalchar{+}}23]{CGLOSS23}
E.~Chung, A.~Golovnev, Z.~Li, M.~Obremski, S.~Saraogi, and
  N.~Stephens-Davidowitz.
\newblock On the randomized complexity of range avoidance, with applications to
  cryptography and metacomplexity.
\newblock {\em ECCC preprint https://eccc.weizmann.ac.il/report/2023/193/},
  2023.

\bibitem[CHLR23]{CHLR23}
Y.~Chen, Y.~Huang, J.~Li, and H.~Ren.
\newblock Range avoidance, remote point, and hard partial truth table via
  satisfying-pairs algorithms.
\newblock In {\em Proceedings of the 55th Annual ACM Symposium on Theory of
  Computing}, STOC 2023, page 1058–1066, New York, NY, USA, 2023. Association
  for Computing Machinery.

\bibitem[CHR24]{chen2023symmetric}
L.~Chen, S.~Hirahara, and H.~Ren.
\newblock Symmetric exponential time requires near-maximum circuit size.
\newblock In {\em Proceedings of the 56th Annual ACM SIGACT Symposium on Theory
  of Computing}, STOC 2024, to appear. Association for Computing Machinery,
  2024.

\bibitem[CL24]{CL23}
Y.~Chen and J.~Li.
\newblock Hardness of range avoidance and remote point for restricted circuits
  via cryptography.
\newblock In {\em Proceedings of the 56th Annual ACM SIGACT Symposium on Theory
  of Computing}, STOC 2024, to appear. Association for Computing Machinery,
  2024.

\bibitem[Coo72]{Cook1972}
S.~A. Cook.
\newblock A hierarchy for nondeterministic time complexity.
\newblock In {\em Proceedings of the fourth annual ACM symposium on Theory of
  computing - STOC ’72}, STOC ’72. ACM Press, 1972.

\bibitem[CR06]{CR06}
V.~T. Chakaravarthy and S.~Roy.
\newblock Oblivious symmetric alternation.
\newblock In {\em STACS}, pages 230--241, 2006.

\bibitem[CR11]{CR11}
V.~T. Chakaravarthy and S.~Roy.
\newblock Arthur and merlin as oracles.
\newblock {\em Comput. Complex.}, 20(3):505--558, 2011.

\bibitem[CT22]{CT21}
L.~Chen and R.~Tell.
\newblock Hardness vs randomness, revised: Uniform, non-black-box, and
  instance-wise.
\newblock In {\em 2021 IEEE 62nd Annual Symposium on Foundations of Computer
  Science (FOCS)}, pages 125--136, 2022.

\bibitem[CW04]{CW04}
J.{-}Y. Cai and O.~Watanabe.
\newblock On proving circuit lower bounds against the polynomial-time
  hierarchy.
\newblock {\em {SIAM} J. Comput.}, 33(4):984--1009, 2004.

\bibitem[CZ19]{CZ19}
E.~Chattopadhyay and D.~Zuckerman.
\newblock {Explicit two-source extractors and resilient functions}.
\newblock {\em Annals of Mathematics}, 189(3):653 -- 705, 2019.

\bibitem[DPWV22]{DPWV22}
P.~Dixon, A.~Pavan, J.~Vander Woude, and N.~V. Vinodchandran.
\newblock Pseudodeterminism: promises and lowerbounds.
\newblock In {\em Proceedings of the 54th Annual ACM SIGACT Symposium on Theory
  of Computing}, STOC 2022, page 1552–1565, New York, NY, USA, 2022.
  Association for Computing Machinery.

\bibitem[FM05]{frandsen2005reviewing}
G.~S. Frandsen and P.~B. Miltersen.
\newblock Reviewing bounds on the circuit size of the hardest functions.
\newblock {\em Information processing letters}, 95(2):354--357, 2005.

\bibitem[FS04]{FS04}
L.~Fortnow and R.~Santhanam.
\newblock Hierarchy theorems for probabilistic polynomial time.
\newblock In {\em Proceedings of the 45th Annual IEEE Symposium on Foundations
  of Computer Science (FOCS)}, pages 316--324, 2004.

\bibitem[FST05]{FST05}
L.~Fortnow, R.~Santhanam, and L.~Trevisan.
\newblock Hierarchies for semantic classes.
\newblock In {\em Proceedings of the 37th Annual ACM SIGACT Symposium on Theory
  of Computing}, pages 348--355. ACM, New York, 2005.

\bibitem[FSW09]{FSW09}
L.~Fortnow, R.~Santhanam, and R.~Williams.
\newblock Fixed-polynomial size circuit bounds.
\newblock In {\em Proceedings of the 24th Annual {IEEE} Conference on
  Computational Complexity, {CCC} 2009, Paris, France, 15-18 July 2009}, pages
  19--26. {IEEE} Computer Society, 2009.

\bibitem[Gas10]{Gasarch2010}
W.~Gasarch.
\newblock
  https://blog.computationalcomplexity.org/2010/07/spares-problems-in-np-thought-to-not-be.html,
  2010.

\bibitem[GGNS23]{GGNS23}
K.~Gajulapalli, A.~Golovnev, S.~Nagargoje, and S.~Saraogi.
\newblock Range avoidance for constant depth circuits: Hardness and algorithms.
\newblock In Nicole Megow and Adam~D. Smith, editors, {\em Approximation,
  Randomization, and Combinatorial Optimization. Algorithms and Techniques,
  {APPROX/RANDOM} 2023, September 11-13, 2023, Atlanta, Georgia, {USA}}, volume
  275 of {\em LIPIcs}, pages 65:1--65:18. Schloss Dagstuhl - Leibniz-Zentrum
  f{\"{u}}r Informatik, 2023.

\bibitem[GLW22]{GLW22}
V.~Guruswami, X.~Lyu, and X.~Wang.
\newblock Range avoidance for low-depth circuits and connections to
  pseudorandomness.
\newblock In Amit Chakrabarti and Chaitanya Swamy, editors, {\em Approximation,
  Randomization, and Combinatorial Optimization. Algorithms and Techniques,
  {APPROX/RANDOM} 2022, September 19-21, 2022, University of Illinois,
  Urbana-Champaign, {USA} (Virtual Conference)}, volume 245 of {\em LIPIcs},
  pages 20:1--20:21. Schloss Dagstuhl - Leibniz-Zentrum f{\"{u}}r Informatik,
  2022.

\bibitem[GM15]{GoldreichMeir2015}
O.~Goldreich and O.~Meir.
\newblock Input-oblivious proof systems and a uniform complexity perspective on
  p/poly.
\newblock {\em ACM Transactions on Computation Theory (TOCT)}, 7(4):1--13,
  2015.

\bibitem[Gol08]{Goldreich08}
O.~Goldreich.
\newblock {\em Computational Complexity: A Conceptual Perspective}.
\newblock Cambridge University Press, 2008.

\bibitem[HS65]{HS65}
J.~Hartmanis and R.~E. Stearns.
\newblock On the computational complexity of algorithms.
\newblock {\em Transactions of the American Mathematical Society},
  117(0):285–306, 1965.

\bibitem[HS66]{HS66}
F.~C. Hennie and R.~E. Stearns.
\newblock Two-tape simulation of multitape turing machines.
\newblock {\em J. ACM}, 13(4):533–546, oct 1966.

\bibitem[ILW23]{ILW23}
R.~Ilango, J.~Li, and R.~Williams.
\newblock Indistinguishability obfuscation, range avoidance, and bounded
  arithmetic.
\newblock In {\em Proceedings of the 55th Annual ACM Symposium on Theory of
  Computing}, STOC 2023, page 1076–1089, New York, NY, USA, 2023. Association
  for Computing Machinery.

\bibitem[IW97]{IW97}
R.~Impagliazzo and A.~Wigderson.
\newblock {P=BPP} unless {E} has subexponential circuits: derandomizing the xor
  lemma.
\newblock In {\em Proceedings of the 29th Annual ACM Symposium on Theory of
  Computing (STOC)}, pages 220--229, 1997.

\bibitem[Kan82]{Kannan82}
R.~Kannan.
\newblock Circuit-size lower bounds and non-reducibility to sparse sets.
\newblock {\em Information and Control}, 55(1-3):40--56, 1982.

\bibitem[KKMP21]{KKMP21}
R.~Kleinberg, O.~Korten, D.~Mitropolsky, and C.~Papadimitriou.
\newblock {Total Functions in the Polynomial Hierarchy}.
\newblock In James~R. Lee, editor, {\em 12th Innovations in Theoretical
  Computer Science Conference (ITCS 2021)}, volume 185 of {\em Leibniz
  International Proceedings in Informatics (LIPIcs)}, pages 44:1--44:18,
  Dagstuhl, Germany, 2021. Schloss Dagstuhl--Leibniz-Zentrum f{\"u}r
  Informatik.

\bibitem[KL80]{KarpLipton80}
R.~M. Karp and R.~J. Lipton.
\newblock Some connections between nonuniform and uniform complexity classes.
\newblock In {\em Proceedings of the 12th Annual {ACM} Symposium on Theory of
  Computing, April 28-30, 1980, Los Angeles, California, {USA}}, pages
  302--309, 1980.

\bibitem[Kor22]{korten2022hardest}
O.~Korten.
\newblock The hardest explicit construction.
\newblock In {\em 2021 IEEE 62nd Annual Symposium on Foundations of Computer
  Science (FOCS)}, pages 433--444. IEEE, 2022.

\bibitem[KP24]{KP24}
O.~Korten and T.~Pitassi.
\newblock Strong vs. weak range avoidance and the linear ordering principle.
\newblock In {\em 2024 IEEE 65th Annual Symposium on Foundations of Computer
  Science (FOCS)}. IEEE Computer Society, 2024.

\bibitem[KW98]{KW98}
J.~K{\"{o}}bler and O.~Watanabe.
\newblock New collapse consequences of {NP} having small circuits.
\newblock {\em {SIAM} J. Comput.}, 28(1):311--324, 1998.

\bibitem[Li23]{Li23}
X.~Li.
\newblock Two source extractors for asymptotically optimal entropy, and (many)
  more.
\newblock In {\em 2023 IEEE 64th Annual Symposium on Foundations of Computer
  Science (FOCS)}, pages 1271--1281, Los Alamitos, CA, USA, nov 2023. IEEE
  Computer Society.

\bibitem[Li24]{li2023symmetric}
Z.~Li.
\newblock Symmetric exponential time requires near-maximum circuit size:
  Simplified, truly uniform.
\newblock In {\em Proceedings of the 56th Annual ACM SIGACT Symposium on Theory
  of Computing}, STOC 2024. Association for Computing Machinery, 2024.

\bibitem[LOS21]{LOS21}
Z.~Lu, I.~C. Oliveira, and R.~Santhanam.
\newblock Pseudodeterministic algorithms and the structure of probabilistic
  time.
\newblock In {\em Proceedings of the 53rd Annual ACM SIGACT Symposium on Theory
  of Computing}, STOC 2021, page 303–316, New York, NY, USA, 2021.
  Association for Computing Machinery.

\bibitem[LY22]{li20221}
J.~Li and T.~Yang.
\newblock 3.1n- o (n) circuit lower bounds for explicit functions.
\newblock In {\em Proceedings of the 54th Annual ACM SIGACT Symposium on Theory
  of Computing}, pages 1180--1193, 2022.

\bibitem[Mah82]{Mahaney82}
S.~R. Mahaney.
\newblock Sparse complete sets of {NP:} solution of a conjecture of berman and
  hartmanis.
\newblock {\em J. Comput. Syst. Sci.}, 25(2):130--143, 1982.

\bibitem[MP07]{vMP07}
D.~van Melkebeek and K.~Pervyshev.
\newblock A generic time hierarchy with one bit of advice.
\newblock {\em Computational Complexity}, 16(2):139--179, 2007.

\bibitem[MVW99]{MVW99}
P.~B. Miltersen, N.~V. Vinodchandran, and O.~Watanabe.
\newblock Super-polynomial versus half-exponential circuit size in the
  exponential hierarchy.
\newblock In {\em COCOON}, pages 210--220, 1999.

\bibitem[NW94]{NisanWigderson94}
N.~Nisan and A.~Wigderson.
\newblock Hardness vs. randomness.
\newblock {\em J. Comput. Syst. Sci.}, 49(2):149--167, 1994.

\bibitem[Rad21]{Radziszowski2021}
S.~P. Radziszowski.
\newblock Small ramsey numbers.
\newblock {\em The Electronic Journal of Combinatorics [electronic only]},
  DS01, 2021.

\bibitem[RS98]{RS98}
A.~Russell and R.~Sundaram.
\newblock Symmetric alternation captures {BPP}.
\newblock {\em Comput. Complex.}, 7(2):152--162, 1998.

\bibitem[RSW22]{RSW22}
H.~Ren, R.~Santhanam, and Z.~Wang.
\newblock On the range avoidance problem for circuits.
\newblock In {\em 2022 IEEE 63rd Annual Symposium on Foundations of Computer
  Science (FOCS)}, pages 640--650, Los Alamitos, CA, USA, nov 2022. IEEE
  Computer Society.

\bibitem[San09]{Santhanam09}
R.~Santhanam.
\newblock Circuit lower bounds for merlin--arthur classes.
\newblock {\em SIAM J. Comput.}, 39(3):1038--1061, 2009.

\bibitem[SFM78]{Seiferas1978}
J.~I. Seiferas, M.~J. Fischer, and A.~R. Meyer.
\newblock Separating nondeterministic time complexity classes.
\newblock {\em Journal of the ACM}, 25(1):146–167, January 1978.

\bibitem[Sha49]{Shannon49}
C.~E. Shannon.
\newblock The synthesis of two-terminal switching circuits.
\newblock {\em Bell Syst. Tech. J.}, 28(1):59--98, 1949.

\bibitem[Vin05]{Vinodchandran05}
N.~V. Vinodchandran.
\newblock A note on the circuit complexity of {PP}.
\newblock {\em Theor. Comput. Sci.}, 347(1-2):415--418, 2005.

\bibitem[Wil14]{Williams14}
R.~Williams.
\newblock Nonuniform {ACC} circuit lower bounds.
\newblock {\em J. {ACM}}, 61(1):2:1--2:32, 2014.

\bibitem[{\v{Z}\'{a}k}83]{Zak83}
S.~{\v{Z}\'{a}k}.
\newblock A turing machine time hierarchy.
\newblock {\em Theoretical Computer Science}, 26(3):327–333, October 1983.

\end{thebibliography}

\end{document}